\providecommand{\U}[1]{\protect\rule{.1in}{.1in}}
\newtheorem{theorem}{Theorem}
\newtheorem{definition}{Definition}
\newtheorem{proposition}{Proposition}
\newtheorem{example}{Example}
\let\oldexample\example
\renewcommand{\example}{\oldexample\normalfont}
\newenvironment{proof}[1][Proof]{\noindent \textbf{#1:} }{\hfill \rule{0.5em}{0.5em}}
\begin{document}

\title{\textbf{Axiomatic characterizations of some simple risk-sharing rules}}
\author{Jan Dhaene \footnote{Corresponding author. Email: jan.dhaene@kuleuven.be}\\Actuarial Research Group, KU Leuven
\and Rodrigue Kazzi\\Actuarial Research Group, KU Leuven
\and Emiliano A. Valdez\\Department of Mathematics, Univ. of Connecticut}

\maketitle

\begin{abstract}

In this paper, we present axiomatic characterizations of some simple risk-sharing (RS) rules, such as the uniform, the mean-proportional and the covariance-based linear RS rules. These characterizations make it easier to understand the underlying principles when applying these rules. Such principles typically include maintaining some degree of anonymity regarding participants' data and/or incident-specific data, adopting non-punitive processes and ensuring the equitability and fairness of risk sharing. By formalizing key concepts such as the reshuffling property, the source-anonymous contributions property and the strongly aggregate contributions property, along with their generalizations, we develop a comprehensive framework that expresses these principles clearly and defines the relevant rules. To illustrate, we demonstrate that the uniform RS rule, a simple mechanism in which risks are shared equally, is the only RS rule that satisfies both the reshuffling property and the source-anonymous contributions property. This straightforward axiomatic characterization of the uniform RS rule serves as a foundation for exploring similar principles in two broad classes of risk-sharing rules, which we baptize the $q$-proportional RS rules and the $(q_1,q_2)$-based linear RS rules, respectively. This framework also allows us to introduce novel particular RS rules, such as the scenario-based RS rules.

\vspace{0.8cm}

\noindent \textbf{Keywords}: risk-sharing (RS) rule, uniform RS, mean-proportional RS, linear RS, scenario-based RS, reshuffling property, aggregate RS rule, source-anonymous contributions property.

\end{abstract}

\newpage

\section{Introduction} \label{sec:intro}

A risk-sharing arrangement is an agreement between different parties, such as individuals, families, business entities or governments, to share the potential costs or losses that might arise from uncertain events. These arrangements are structured to help reduce the burden of risk for any single participant by spreading it out across the group, thus making the impact of adverse outcomes for the individual participants more manageable. An example is ‘The Risk Sharing Platform,’ established in 2021. The goal of the Risk Sharing Framework developed by this platform is to provide a basis for interested humanitarian actors by improving risk management within delivery chains through a principled approach to sharing the burden of preventive measures and responsibility for materializing losses (\cite{risksharing2022}).

The concept of risk-sharing among participants has been in place for a long time, especially in traditional community agreements such as funeral cost-sharing, which exists in many parts of the world. As an example, consider the traditional `idir' collectives in Ethiopia. Members of an idir make monthly financial contributions, which are used to help participants to organize funerals for their relatives and to provide solace in grieving. Informal risk-sharing practices are still prevalent in many regions to help manage a variety of risks. For instance, small villages in India collaborate to mitigate risks associated with adverse weather affecting their crops (\cite{mazur2023sharing}), and in rural Tanzania, communities share the financial burdens of illness (\cite{de2006risk, de2011social}). Additionally, informal risk-sharing can manifest as social funds, which are community-managed resource pools that members can access in times of need (\cite{le2018interest, van2024autonomy}).

Recently, the concept of risk-sharing has received increased attention due to the growing importance of decentralization in developed economies. Unlike classical insurance models, where reliance is on a large insurance company, in decentralized systems, risks are managed directly among the participants (\cite{feng2023peer}). For example, `Broodfondsen', conceptualized in 2006 in the Netherlands and later imitated in the UK, are small-sized risk-sharing pools of self-employed individuals that provide mutual financial support in case of illness or disability (\cite{breadfunds2017, oostveen2018selfemployed}).

A wide variety of risks can be shared among community members. Risks can be broken down into systematic risks, which affect all participants in the same manner, and non-systematic (idiosyncratic) risks, which affect participants independently. Natural disasters, for instance, appear to be mainly systematic, but there is definitely an idiosyncratic component to them, as the damage could be local or the crops planted may be different. In rural areas, where unpredictable weather conditions like droughts, storms, or floods are common, farmers often establish mutual cooperatives. If such adverse weather damages their crops, the cooperative helps ensure that every farmer is equally protected, so that no one has to bear the full impact alone. This risk-sharing agreement helps farmers create a financial safety net that reduces the non-systematic component of the overall financial burden of such disasters. See \cite{vigani2019}.

One interesting aspect highlighted by \cite{risksharing2022} is the underlying principles deemed necessary for effective collaborative risk-sharing. These include maintaining the confidentiality of participants' data and incident-specific data, adopting non-punitive processes, and ensuring that risks are shared equitably rather than equally. This discussion sheds light on the fact that pools often prioritize specific principles of risk-sharing rather than specific rules. Thus, it becomes essential to understand what risk-sharing rule results from a set of requested principles. Conversely, when rules are chosen over principles, it is crucial to recognize the underlying principles that are implicitly endorsed when selecting a particular rule. This two-way question can be explored through axiomatizations and is the primary concern of our paper.

In order to formalize the principle of risk-sharing, consider $n$ economic agents, labeled $i = 1, 2, \ldots, n$. Starting with present time $0$, each agent $i$ faces a loss $X_i$ at time $1$. Without any form of insurance or pooling, agents would bear their own loss individually, i.e., at time $1$, agent $i$ would incur a loss equal to the realization of $X_i$.

All random variables (r.v.'s) considered in this paper are defined on a given probability space $(\Omega, \mathcal{F}, \mathbb{P})$. Equalities between r.v.'s are assumed to hold almost surely. A r.v. will always be denoted by an uppercase letter (e.g., $X_i$), with the subscript indicating the corresponding agent $i$. A random vector of size $n$ will be denoted by a bold uppercase letter, e.g., $\boldsymbol{X} = (X_1, X_2, \ldots, X_n)$.

Let $\chi$ be an appropriate set of r.v.'s on the probability space $(\Omega, \mathcal{F}, \mathbb{P})$. We interpret $\chi$ as the collection of losses (risks) of interest. The set $\chi$ is assumed to be a convex cone of r.v.'s on this probability space, which means that $\chi$ is closed under positive scalar multiplication and under addition. We assume that 0 is an element of  $\chi$. Depending on the situation at hand, the set $\chi$ could be defined as the set of all r.v.'s $X$ for which $\mathbb{E}[\left\vert X \right\vert] < \infty$, with $\mathbb{E}$ being the expectation under $\mathbb{P}$, commonly referred to as $L^1 = L^1(\Omega, \mathcal{F}, \mathbb{P})$, or the set $L^2 = L^2(\Omega, \mathcal{F}, \mathbb{P})$ of all r.v.'s with finite first and second moments. Other possible choices for $\chi$ include the set of all (essentially) bounded r.v.'s, denoted as $L^\infty = L^\infty(\Omega, \mathcal{F}, \mathbb{P})$, and the set of all r.v.'s, denoted as $L^0 = L^0(\Omega, \mathcal{F}, \mathbb{P})$. In addition, for any $L^p$ mentioned above, the subset $L_+^p$ consisting of all non-negative elements of $L^p$ might also be a suitable choice. In this paper, we will always silently assume that $\chi$ only contains non-negative r.v.'s (losses), although several results that we will present hereafter remain to hold (or can easily be adapted) when this non-negativity restriction is not made.  

The $n$-dimensional random vector of losses $\boldsymbol{X} = (X_1, X_2, \ldots, X_n)$ is referred to as the loss vector. The joint cumulative distribution function (cdf) of this loss vector $\boldsymbol{X}$ is represented by $F_{\boldsymbol{X}}$. The marginal cdfs of the individual losses are denoted by $F_{X_1}, F_{X_2}, \ldots, F_{X_n}$, respectively. The total or aggregate loss experienced by the $n$ agents with the loss vector $\boldsymbol{X}$ is given by
\begin{equation}
S_{\boldsymbol{X}} = \sum_{i=1}^{n} X_i. \label{eq1}
\end{equation}
Hereafter, we will frequently refer to $\boldsymbol{X}$ as the pool (of losses) and to each agent as a participant in the pool.

A reallocation of losses, as defined below, is a fundamental concept in a risk-sharing arrangement.

\begin{definition}
For any pool $\boldsymbol{X} $ with aggregate loss $S_{\boldsymbol{X}}$, the set $\mathcal{A}_{\boldsymbol{X}}$ is defined as
\[
\mathcal{A}_{\boldsymbol{X}} = \left\{ (Y_1, Y_2, \ldots, Y_n) \in (L^0)^n \,\bigg| \sum_{i=1}^{n} Y_i = S_{\boldsymbol{X}} \right\}.
\]
\end{definition}

The elements of $\mathcal{A}_{\boldsymbol{X}}$ are referred to as the $n$-dimensional reallocations of $\boldsymbol{X}$. Note that the set $\mathcal{A}_{\boldsymbol{X}}$, as defined in this article, is broader than the typical set of reallocations where $Y_i$'s are restricted to $\chi$, as in \cite{jiao2022axiomatic} or in \cite{denuit2024comonotonicity}. This intentional choice allows for greater flexibility. For instance, it permits elements of the reallocation to be negative even when the original losses are non-negative. 

Risk-sharing (often abbreviated as RS) within a pool $\boldsymbol{X} \in \chi^n$ is a two-stage process. In the \textit{ex-ante step} (at time 0), the losses $X_i$ within the pool are reallocated by transforming $\boldsymbol{X}$ into another random vector $\boldsymbol{C}[\boldsymbol{X}] \in \mathcal{A}_{\boldsymbol{X}}$:
\[
\boldsymbol{C}[\boldsymbol{X}] = \big(C_1[\boldsymbol{X}], C_2[\boldsymbol{X}], \ldots, C_n[\boldsymbol{X}]\big).
\]
In the \textit{ex-post step} (at time 1), each participant receives the realization of his initial loss $X_i$ from the pool and pays the realization of $C_i[\boldsymbol{X}]$ to the pool. Hereafter, $C_i[\boldsymbol{X}]$ is referred to as the contribution of participant $i$ to the pool, while $\boldsymbol{C}[\boldsymbol{X}]$ is called the contribution vector.

Since $\boldsymbol{C}[\boldsymbol{X}] \in \mathcal{A}_{\boldsymbol{X}}$, the risk-sharing process, as explained above, satisfies the \textbf{full allocation condition}:
\begin{equation}
\sum_{i=1}^{n}C_i[\boldsymbol{X}] = \sum_{i=1}^{n}X_i.
\label{eq2}
\end{equation}
This equality ensures that all losses are fully covered by the contributions, provided each participant is willing to pay his contribution at time 1. To decrease the probability of participant default, one could require each participant to pay a sufficiently high deposit at time 0. If this deposit exceeds the required contribution, the excess can be refunded to the participant at time 1. However, guaranteeing non-default by participants is only feasible if each deposit is equal to (or higher than) the maximum contribution that an individual might be required to make, which may be unreasonable to request.

In this paper, we will use the term decentralized insurance for a risk-sharing system where an administrator collects the contributions of the participants ex post and then redistributes these contributions to indemnify each participant for his loss. In the literature, decentralized insure is often called P2P insurance. However, it is important to note that the term P2P insurance is sometimes restricted to risk-sharing arrangements that involve direct transfers between participants, rather than relying on an administrator to first collect the contributions and then to redistribute them among participants in the pool. 

Although risk-sharing has a long history in actuarial science and insurance, recently several researchers have shown a renewed interest in this research topic in the decentralized setting. \cite{denuit2022risk} investigate the properties of several risk-sharing rules. One particular RS rule that they consider is the so-called conditional mean RS rule, which was introduced in the actuarial literature in \cite{denuit2012convex}. \cite{dhaene2024axiomatic} also introduce the quantile RS rule. \cite{denuit2021risk} study various other risk-sharing rules, including linear risk-sharing rule, that can be used in a large pool with heterogeneous losses including P2P setups. \cite{levantasi2022mutual} and \cite{clemente2023risk} explore P2P insurance arrangements that account for fluctuations in total losses by adding a safety margin to participant contributions. \cite{feng2023peer} propose a P2P risk-sharing model using convex programming to develop an optimal and fair way to share risks with a focus on applying it to a flood risk pooling. \cite{ghossoub2024pareto}
examine risk-sharing allocations that take into account the risk tolerance of each participant in relation to tail events. \cite{ghossoub2024efficiency} investigate Pareto efficiency in a pure-exchange economy where agents' preferences are represented by risk-averse monetary functions. As an application, they examine risk-sharing markets where all agents evaluate risk through law-invariant coherent risk measures. 

In this paper, we focus on developing basic principles, or axioms, that define certain simple RS rules such as the uniform, mean-proportional and covariance-based linear RS rules. We axiomatically characterize these rules and embed them within broader classes of proportional and linear RS rules, which we define and label as $q$-proportional and $(q_1,q_2)$-based linear RS rules. This approach also leads to the introduction of new, practical and simple specific RS rules such as the scenario-based RS rules. The axioms are based on principles commonly requested in pools, such as the degree of anonymity of participants' personal information or details on their individual losses. We will explicitly define several properties related to these aspects later in the paper. For example, we demonstrate how the uniform RS rule can be established by satisfying two fundamental requirements: the reshuffling property and the source-anonymous contributions property. Axiomatic approaches are crucial because they provide a clear framework for comparing and evaluating different decision rules, as seen in our case with various risk-sharing rules (\cite{gilboa2019axiomatizations}). Once a set of axioms is established, a RS rule that meets these criteria is considered consistent with the underlying principles.

There has been a significant amount of research on the axiomatic foundations of many decision-making rules including insurance pricing, e.g., \cite{wang1997pricing} and \cite{goovaerts1997prem}, but when it comes to risk-sharing rules, the research work is still quite limited. A few notable contributions include the work by \cite{denuit2022risk}, who consider an extensive list of properties that RS rules might satisfy. Additionally, \cite{jiao2022axiomatic} provide an axiomatic characterization of the conditional mean RS rule, and \cite{dhaene2024axiomatic} present an axiomatic characterization of the quantile RS rule. \cite{ghossoub2024allocation} propose several axiomatizations for robust versions of linear RS rules. Our work aims to further expand on this by developing axiomatic frameworks for some simple RS rules. We note that constructing RS rules using an axiomatic approach is just one way to define an appropriate rule based on the specific situation at hand. Alternatively, the choice of a RS rule can also be made by solving an optimization problem, the solution of which may lead to a suitable RS rule. A recent paper by \cite{yang2024optimality} offers interesting insights into this approach.

The rest of the paper is organized as follows. Section~\ref{sec:rsrules} introduces the fundamental concepts of risk-sharing rules and defines several traditional and new classes of these rules. In Section~\ref{sec:properties}, we introduce the properties that constitute the basis for the characterization of these rules.  Detailed discussions of two axiomatizations for the uniform RS rule,  the class of $q$-proportional (which includes the mean-proportional) RS rules and the class of $(q_1, q_2)$-based (which includes the covariance-based) linear RS rules are presented in Section~\ref{sec:axiom_uni}, Section~\ref{section:charac_qprop} and Section~\ref{section:charac_qqlinear}, respectively.  Finally, some concluding remarks and a summary of our findings are presented in Section 7. 

\section{Definition and examples of risk-sharing rules}\label{sec:rsrules}

We begin by formally defining what we mean by a risk-sharing rule, then introduce some classical and linear risk-sharing rules along with new `simple' classes of risk-sharing rules.

\begin{definition}
A risk-sharing rule for a given group of $n$ participants, each with losses in $\chi$, is a mapping $\boldsymbol{C} $  that transforms any pool $\boldsymbol{X}$ in $\chi^n$ into a contribution vector $\boldsymbol{C}[\boldsymbol{X}]$ in $\mathcal{A}_{\boldsymbol{X}}$:
\[
\boldsymbol{X} \in \chi^n \rightarrow \boldsymbol{C}[\boldsymbol{X}] \in \mathcal{A}_{\boldsymbol{X}}.
\]
\end{definition}

It is important to note that $\boldsymbol{C}$ is not simply a function from $\mathbb{R}^n$ to (a subset of) $\mathbb{R}^n$. Instead, it is a more general function from $\chi^n$ to (a subset of) $\ (L^0)^n$. Seen from time $0$, the contribution vector $\boldsymbol{C}[\boldsymbol{X}]$ is a random vector, as it depends on $\boldsymbol{X}$, and potentially on other sources of randomness. Introducing the notation $\sigma(\boldsymbol{X})$ for the $\sigma$-algebra generated by $\boldsymbol{X}$, this implies that $\boldsymbol{C}[\boldsymbol{X}]$ is not necessarily measurable with respect to $\sigma(\boldsymbol{X})$.

For a given group of participants, a risk-sharing rule is established and defined at time $0$ prior to the realization of any losses. Such a rule serves as a formal mechanism that determines how the total or aggregate loss, observable at time $1$, will be allocated among the participants or agents. The design and the implementation of the appropriate RS rule are critical for ensuring the success and the sustainability of the RS arrangement.

Hereafter, when considering a RS rule $\boldsymbol{C} $ and a pool $\boldsymbol{X}$, we will silently assume that $\boldsymbol{C}$ is defined in $\chi^n$ and that $\boldsymbol{X}$ is an element of $\chi^n$. In certain cases, it will be necessary to define $\chi^n$ more specifically, depending on the RS rule at hand.

\subsection{Two well-known risk-sharing rules}

Let us begin by introducing two simple but important RS rules: the uniform RS rule and the mean-proportional RS rule.

\begin{definition}[Uniform RS rule]
A RS rule $\boldsymbol{C}$ is called the uniform RS rule if, for any pool $\boldsymbol{X}$, the contribution vector is given by

\begin{equation} \label{eq:uniRS}
\boldsymbol{C}^{\mathrm{uni}}[\boldsymbol{X}] = \left(\frac{S_{\boldsymbol{X}}}{n}, \frac{S_{\boldsymbol{X}}}{n}, \ldots, \frac{S_{\boldsymbol{X}}}{n}\right). 
\end{equation}
\end{definition}

\begin{definition}[Mean-proportional RS rule]
A RS rule $\boldsymbol{C}$ in $\chi^n\subseteq (L_+^1)^n$ is the mean-proportional RS rule, if for any pool $\boldsymbol{X} $ with at least one $\mathbb{E}[X_j]>0$, the contribution vector is given by
\begin{equation} \label{eq:meanRS}
\boldsymbol{C}^{\mathrm{mean}}[\boldsymbol{X}] = \left(\frac{\mathbb{E}[X_1]}{\mathbb{E}[S_{\boldsymbol{X}}]} S_{\boldsymbol{X}}, \frac{\mathbb{E}[X_2]}{\mathbb{E}[S_{\boldsymbol{X}}]} S_{\boldsymbol{X}}, \ldots, \frac{\mathbb{E}[X_n]}{\mathbb{E}[S_{\boldsymbol{X}}]} S_{\boldsymbol{X}}\right).
\end{equation}
\end{definition}

The uniform RS rule distributes the aggregate loss $S_{\boldsymbol{X}}$ equally among all participants. It is the simplest non-trivial and most well-known RS rule. The mean-proportional RS rule allocates a participant-specific proportion of the aggregate loss $S_{\boldsymbol{X}}$ to all participants, based on the expected value of their individual losses within the pool. The uniform and the mean-proportional RS rules have been widely used in a risk-sharing context in practice.

To ensure that the mean-proportional RS rule is well-defined, it is assumed that all elements of $\chi$ have finite expectations. Also, notice that in case all expectations of the losses are zero, then all fractions in~\ref{eq:meanRS} are equal to $0/0$, and hence, the expression in~\ref{eq:meanRS} is not well-defined. Taking into account that we assumed that all losses are non-negative, this situation will only occur in the degenerate case that all losses are zero. In this case, we do not specify the appropriate reallocation, and we leave it at the administrator's discretion to decide what the particular reallocation is when this situation occurs. We will come back to this observation later in this paper when introducing the class of $q$-proportional RS rules, of which the mean-proportional RS rule is a particular element.  

\subsection{Linear risk-sharing rules}

The uniform RS rule treats all individual risks equally, regardless of their characteristics. In contrast, the mean-proportional rule allows for differentiation between risks based on their means. However, it may happen that participants want the risk-sharing rule to reflect more information about their individual risks. For example, they might want to include information on the variability of each risk and its dependence with the aggregate risk. This can be achieved through linear risk-sharing rules.

We recall two widely used versions of linear risk-sharing rules. These rules determine each participant $i$'s contribution by adjusting his mean loss $\mathbb{E}[X_i]$, based on how much he should be penalized or rewarded for the deviation of the total loss $S_{\boldsymbol{X}}$ from the total mean loss $\mathbb{E}[S_{\boldsymbol{X}}]$.

\begin{definition}[Covariance-based linear RS rule]\label{def:cov-linear}
A RS rule $\boldsymbol{C}$ in $\chi^n\subseteq (L_+^2)^n$ is called the covariance-based linear RS rule if for any pool $\boldsymbol{X}$ for which $\mathrm{var}(S_{\boldsymbol{X}})>0$, the contribution vector is defined as follows:
\[
\small
\boldsymbol{C}^{\mathrm{cov}}[\boldsymbol{X}] = \left(\mathbb{E}[X_1] + \frac{\mathrm{cov}(X_1, S_{\boldsymbol{X}})}{\mathrm{var}(S_{\boldsymbol{X}})} (S_{\boldsymbol{X}} - \mathbb{E}[S_{\boldsymbol{X}}]), \ldots, \mathbb{E}[X_n] + \frac{\mathrm{cov}(X_n, S_{\boldsymbol{X}})}{\mathrm{var}(S_{\boldsymbol{X}})} (S_{\boldsymbol{X}} - \mathbb{E}[S_{\boldsymbol{X}}]) \right).
\]
\end{definition}
Remark that the covariance-based RS rule is not specified in case the aggregate claims have a deterministic (constant) value. Further notice that although we assumed the losses to be non-negative, it may happen that the realization of the contribution is negative. 
\begin{definition}[Variance-based linear RS rule]\label{def:varRS}
A RS rule $\boldsymbol{C}$ in $\chi^n\subseteq (L_+^2)^n$ is called the variance-based linear RS rule if, for any pool $\boldsymbol{X}$ for which at least one $\mathrm{var}(X_j)>0$, the contribution vector is defined as follows:
\[
\small
\boldsymbol{C}^{\mathrm{var}}[\boldsymbol{X}] = \left( \mathbb{E}[X_1] + \frac{\mathrm{var}(X_1)}{\sum_{k=1}^n \mathrm{var}(X_k)} (S_{\boldsymbol{X}} - \mathbb{E}[S_{\boldsymbol{X}}]), \ldots, \mathbb{E}[X_n] + \frac{\mathrm{var}(X_n)}{\sum_{k=1}^n \mathrm{var}(X_k)} (S_{\boldsymbol{X}} - \mathbb{E}[S_{\boldsymbol{X}}]) \right).
\]
\end{definition}

The variance-based RS rule is not specified only in case all individual losses of the pool under consideration are deterministic. The standard deviation-based linear RS rule is defined by replacing the variance measure in Definition~\ref{def:varRS} with the standard deviation measure. In fact, there are many measures that can replace the mean and the variance, as will be discussed in Section~\ref{section:classes}.

In both the covariance-based and variance-based linear RS rules, the deviation $(S_{\boldsymbol{X}} - \mathbb{E}[S_{\boldsymbol{X}}])$ is distributed among the participants as adjustments over their individual mean losses. For the covariance-based linear RS rule, the adjustment of the contribution for participant $i$ is determined by the ratio of $\mathrm{cov}(X_i, S_{\boldsymbol{X}})$ to the total covariance $\sum_{i=1}^n \mathrm{cov}(X_i, S_{\boldsymbol{X}})$, which equals $\mathrm{var}(S_{\boldsymbol{X}})$. This ratio effectively modifies each participant's mean loss according to his contribution to the overall risk variability of the pool.


Conversely, in the variance-based linear RS rule, each participant $i$'s contribution deviates from his individual mean loss based on the ratio of $\mathrm{var}(X_i)$ to the total variance $\sum_{i=1}^n \mathrm{var}(X_i)$. This approach emphasizes the individual risks the participant brings to the pool, focusing on the inherent risk associated with each participant. Clearly, the variance-based linear RS rule treats any risk vector $\mathbf{X}$ as the covariance-based linear RS rule would treat a risk vector of independent copies of the individual risks in $\mathbf{X}$.

\subsection{Classes of risk-sharing rules}\label{section:classes}

For the uniform and mean-proportional RS rules, each participant's contribution is determined by a ratio of the total or aggregate loss, specifically $\frac{1}{n}$ and $\frac{\mathbb{E}[X_i]}{\mathbb{E}[S_{\boldsymbol{X}}]}$, respectively. We extend this approach by considering different types of measures to define the ratios, which leads us to introduce new risk-sharing rules.

In this paper, we define a (one-dimensional) risk metric $q$ as a function from $\chi$ to $\mathbb{R}_0^+$, the set of non-negative real numbers. Examples of risk metrics include $q[X]=1,  q[X] = \mathbb{E}[X]$ and $q[X] = X(\omega^\ast)$,
where $\omega^\ast$ is a scenario that is considered as `typical' in the world modeled by the probability space under consideration.

We now define the $q$-proportional RS rule.

\begin{definition}[$q$-proportional RS Rule]
Consider a risk metric $q : \chi \rightarrow \mathbb{R}_0^+$. A RS rule $\boldsymbol{C}$ is said to be the $q$-proportional RS rule if, for any pool $\boldsymbol{X} $ with at least one $q[X_j]>0$, the contribution vector is given by
\[
\boldsymbol{C}^{\mathrm{prop}}[\boldsymbol{X}] = \left(\frac{q[X_1]}{\sum_{k=1}^n q[X_k]} S_{\boldsymbol{X}}, \frac{q[X_2]}{\sum_{k=1}^n q[X_k]} S_{\boldsymbol{X}}, \ldots, \frac{q[X_n]}{\sum_{k=1}^n q[X_k]} S_{\boldsymbol{X}}\right).
\]
\end{definition}

Note that the $q$-proportional RS rule is not specified for those pools in $\chi^n$ for which all ${q[X_i]}$ are equal to zero. For instance, if the risk metric is the expectation, then the RS rule is not specified in case all losses are always equal to zero. We leave it at the discretion of the administrator, or the pooling community, to decide on how to cope with such a situation.  Note however that in case all ${q[X_i]}$ are positive and equal, the contribution vector $\boldsymbol{C}^{\text{prop}}[\boldsymbol{X}]$ leads to equal contributions for any participant. Taking into account this observation, it may be reasonable to define the $q$-proportional RS rule for all pools in $\chi^n$ by `reading' $0/0$ as $1/n$. 

Further note that by choosing the appropriate risk metrics, the $q$-proportional RS rule reduces to the uniform RS rule or the mean-proportional RS rule. Let us now present two other examples.

\begin{example}[Variance-proportional RS rule] \label{eq:varRS}
When contributions are based on the fluctuation of participants' losses, the variance-proportional RS rule may be used by setting $q[X_i] = \mathrm{var}(X_i)$ as the risk metric so that the contribution of participant $i$ is given by
\[
C_i[\boldsymbol{X}] = \frac{\mathrm{var}(X_i)}{\sum_{k=1}^n \mathrm{var}(X_k)} S_{\boldsymbol{X}} , \quad \text{for } i=1,2,\ldots,n.
\]
This approach leads to higher contributions for participants with more variable losses, compared to the contributions linked to more stable losses.
\end{example}

\begin{example}[Standard deviation-proportional RS rule] \label{eq:sdRS}
An alternative to the above approach is to use standard deviation to measure the volatility of participants' losses. Here, the risk metric is set as $q[X_i] = \sqrt{\mathrm{var}(X_i)} = \sigma(X_i)$ and the contribution formula is
\[
C_i[\boldsymbol{X}] = \frac{\sigma(X_i)}{\sum_{k=1}^n \sigma(X_k)} S_{\boldsymbol{X}} , \quad \text{for } i=1,2,\ldots,n.
\]
Similar to the variance proportional RS rule, this method imposes the highest contributions for the losses with the largest standard deviation.
\end{example}

Another RS rule that is $q$-proportional is based on choosing a preset scenario $\omega^* \in \Omega$, evaluating each individual risk at that scenario, and using it as the risk metric for distributing the losses among the participants. We baptize this rule the scenario-based RS rule. The advantage of this rule is that it does not require any knowledge on probability theory to be applied. One only needs `expert knowledge' and 'agreement' on what is considered as a 'typical scenario' and on the particular realizations of the losses in such a scenario. We define this RS rule as follows.

\begin{definition}[Scenario-based proportional RS rule]
A RS rule $\boldsymbol{C}$ is said to be a scenario-based linear RS rule if for any pool $\boldsymbol{X}$ with at least one $X_j(\omega^*)>0$, the contribution vector is defined by
\[
\small
\boldsymbol{C}^{\mathrm{scen,prop}}[\boldsymbol{X}] = 
\left(\frac{X_1(\omega^*)}{\sum_{k=1}^n X_k(\omega^*)} S_{\boldsymbol{X}}, \frac{X_2(\omega^*)}{\sum_{k=1}^n X_k(\omega^*)} S_{\boldsymbol{X}}, \ldots, \frac{X_n(\omega^*)}{\sum_{k=1}^n X_k(\omega^*)} S_{\boldsymbol{X}}\right),
\]
where \(\omega^*\) represents an a priori chosen possible (typical) state of the world.
\end{definition}


Inspired by covariance-based and variance-based linear RS rules, one might consider using (one- and two-dimensional) risk metrics different from the mean and covariances/variances to define other risk-sharing rules. Suppose these risk metrics can be expressed as functions $q_1: \chi \rightarrow   \mathbb{R}_0^+$ and $q_2: \chi^2 \rightarrow \mathbb{R}$. Based on these metrics, we can define the following class of linear risk-sharing rules.

\begin{definition}[$(q_1,q_2)$-based linear RS rule]
Consider the risk metrics $q_1 : \chi \rightarrow \mathbb{R}_0^+$ and $q_2 : \chi^2 \rightarrow \mathbb{R}$. A RS rule $\boldsymbol{C}$ is said to be the $(q_1,q_2)$-based linear RS rule if for any pool $\boldsymbol{X}$ for which the denominator in the following expression is not zero, the contribution vector is defined by
\[
\small
\boldsymbol{C}^{\mathrm{lin}}[\boldsymbol{X}] = \left(q_1[X_1] + \frac{q_2[X_1, S_{\boldsymbol{X}}]}{\sum_{k=1}^n q_2[X_k, S_{\boldsymbol{X}}]} (S_{\boldsymbol{X}} - \sum_{k=1}^n q_1[X_k]), \ldots, q_1[X_n] + \frac{q_2[X_n, S_{\boldsymbol{X}}]}{\sum_{k=1}^n q_2[X_k, S_{\boldsymbol{X}}]} (S_{\boldsymbol{X}} - \sum_{k=1}^n q_1[X_k])\right).
\]
\end{definition}

Examples of $q_1$ include $q_1[X_i] = \mathbb{E}[X_i]$ whereas examples of $q_2$ include $q_2[X_i, S] = \mathrm{cov}[X_i, S]$ and $q_2[X_i, S] = \mathrm{var}[X_i]$. Hence, both the covariance-based and the variance-based linear RS rules are specific elements of the general class $(q_1,q_2)$-based linear RS rules. Note that the $(q_1,q_2)$-based linear RS rule is not specified in case the denominator in the contribution vector is equal to $0$. A possible solution could be to set all fractions equal to $1/n$ in such a case.

One practical application of $(q_1,q_2)$-based linear risk-sharing  involves selecting specific scenario-based risk metrics. Suppose \(q_1[X_i] = X_i(\omega^*)\) and \(q_2[X_i, S] = (X_i(\overline{\omega}) - X_i(\underline{\omega}))(S(\overline{\omega}) - S(\underline{\omega}))\), where \(\omega^*\) represents a `typical' state of the world while \(\overline{\omega}\) and \(\underline{\omega}\) represent two `extreme'  states of the world. This choice of risk metrics avoids the need for knowledge of the distribution functions of the random losses and relies instead on their realizations in three specific scenarios: \(\omega^*\), \(\overline{\omega}\), and \(\underline{\omega}\). This approach leads us to define the scenario-based linear RS rule as follows:

\begin{definition}[Scenario-based linear RS rule]
A RS rule $\boldsymbol{C}$ is said to be a scenario-based linear RS rule if for any pool $\boldsymbol{X}$ with $S_{\boldsymbol{X}}(\overline{\omega}) \neq S_{\boldsymbol{X}}(\underline{\omega})$, the contribution vector is defined by
\[
\small
\boldsymbol{C}^{\mathrm{scen,lin}}[\boldsymbol{X}] = \left(X_1(\omega^*) + \frac{X_1(\overline{\omega}) - X_1(\underline{\omega})}{S_{\boldsymbol{X}}(\overline{\omega}) - S_{\boldsymbol{X}}(\underline{\omega})} (S_{\boldsymbol{X}} - S_{\boldsymbol{X}}(\omega^*)), \ldots, X_n(\omega^*) + \frac{X_n(\overline{\omega}) - X_n(\underline{\omega})}{S_{\boldsymbol{X}}(\overline{\omega}) - S_{\boldsymbol{X}}(\underline{\omega})} (S_{\boldsymbol{X}} - S_{\boldsymbol{X}}(\omega^*)) \right),
\]
where \(\omega^*\), \(\overline{\omega}\), and \(\underline{\omega}\) represent three a priori chosen possible (typical vs. extreme) states the world.
\end{definition}

We note that the introduced scenario-based proportional and linear risk-sharing rules can also be extended to allow for different states $\omega^*$, $\overline{\omega}$, and $\underline{\omega}$ for each participant, meaning that each individual risk is evaluated under different scenarios. However, we do not elaborate on this extension in this paper.

It is easy to verify that in case $ q_2[X_i, S_{\boldsymbol{X}}] = q_1[X_i] $ holds for any $i$ and any $\boldsymbol{X} \in \chi^n$, the $(q_1,q_2)$-based linear RS rule reduces to the $q$-proportional RS rule. Consequently, the $(q_1,q_2)$-based linear RS rule encompasses any RS rule discussed before in this paper as a particular case.

In the remainder of this paper, we will explore several characterizations of the RS rules presented so far. First, in the following section we define some properties that RS rules may (or may not) satisfy. 

\section{Properties of risk-sharing rules}\label{sec:properties}

As mentioned above, axiomatic characterizations of different classes of risk-sharing rules have been examined by \cite{jiao2022axiomatic} and by \cite{dhaene2024axiomatic}, who focus on conditional mean risk-sharing rules and quantile risk-sharing rules, respectively. Drawing inspirations from these two papers, we will identify appropriate properties (axioms) that characterize each of the RS rules listed in Section~\ref{sec:rsrules}, which represent some of the most commonly used RS rules in practice. In this section, we introduce some general properties that RS rules may or may not satisfy, which can be seen as possible candidates for the axioms characterizing these RS rules.

A reshuffle of the pool $\boldsymbol{X} = (X_{1}, X_{2}, \ldots, X_{n})$ is a random vector $\boldsymbol{X}^{\pi}$ defined by
\[
\boldsymbol{X}^{\pi} = (X_{\pi(1)}, X_{\pi(2)}, \ldots, X_{\pi(n)}),
\]
where $\pi = (\pi(1), \pi(2), \ldots, \pi(n))$ represents a permutation of the set $\{1, \ldots, n\}$. A reshuffle is also known as a permutation or a rearrangement. Note that $\boldsymbol{X}$ and $\boldsymbol{X}^{\pi}$ are composed of the same individual losses, with only their positions being rearranged. After the reshuffle, $X_{\pi(i)}$ becomes the new loss attributed to participant $i$. Obviously, if $\boldsymbol{X} \in \chi^{n}$, then $\boldsymbol{X}^{\pi} \in \chi^{n}$ as well.

\begin{definition}[Reshuffling]
A RS rule $\boldsymbol{C}$ satisfies the reshuffling property if for any $\boldsymbol{X} \in \chi^{n}$ and any of its reshuffles $\boldsymbol{X}^{\pi}$, the following holds:
\[
C_i[\boldsymbol{X}^{\pi}] = C_{\pi(i)}[\boldsymbol{X}], \quad \text{for any } i = 1, \ldots, n.
\]
\end{definition}

As an example, consider the pool $\boldsymbol{X} = (X_{1}, X_{2}, X_{3})$ and its reshuffled version $\boldsymbol{X}^{\pi} = (X_{3}, X_{1}, X_{2})$. Since $\pi(1) = 3$, the reshuffling property implies that
\[
C_{1}[\boldsymbol{X}^{\pi}] = C_{3}[\boldsymbol{X}].
\]
The reshuffling property is straightforward to interpret: it indicates that losses and contributions are interconnected such that when participants exchange their individual losses, their contributions are exchanged correspondingly. Loosely speaking, a RS rule satisfying the reshuffling property involves for each participant a contribution based on the loss he brings to the pool, but not on other characteristics of his identity. 

It is easy to verify that all the RS rules mentioned in Section~\ref{sec:rsrules} adhere to the reshuffling property. As an example of a RS rule that does not satisfy the reshuffling property, consider the order statistics RS rule $\boldsymbol{C}^{\mathrm{ord}}$ which is defined as
\begin{equation}\label{order_stat}
 \boldsymbol{C}^{\mathrm{ord}}[\boldsymbol{X}] = (X_{(1)}, X_{(2)}, \ldots, X_{(n)}),   
\end{equation}
where $X_{(i)}$ is the \(i\)-th order statistic, i.e., the \(i\)-th smallest value in $\boldsymbol{X}$. Thus, we have $X_{(1)} \leq X_{(2)} \leq \ldots \leq X_{(n)}$. This RS rule can be interpreted as ordering participants based on their ascending risk-bearing capacities, such as their wealth, with those having higher capacities contributing more. It is easy to verify that $\boldsymbol{C}^{\mathrm{ord}}$ does not satisfy the reshuffling property.

\begin{definition}[Source-anonymous contributions]\label{def:source_anonymous}
The contributions of a RS rule $\boldsymbol{C} $ are said to be source-anonymous if, for any pool  $\boldsymbol{X} $ and any of its reshuffles $\boldsymbol{X}^{\pi}$, it holds that
\[
C_{i}[\boldsymbol{X}^{\pi}] = C_{i}[\boldsymbol{X}] \quad \text{for any } i = 1, \ldots, n.
\]
\end{definition}

Source-anonymity of a risk-sharing rule means that the contributions are not tied to who specifically incurs the losses $X_{1}, X_{2}, \ldots, X_{n}$. Instead, each participant's contribution is based on the collective set of losses, without considering which agent experienced which loss. In other words, contributions are determined by the individual losses, but the source of these individual losses is irrelevant for determining these contributions. 

It can easily be verified that the uniform RS rule satisfies the source-anonymity condition, while the mean-proportional RS rule does not satisfy this property.

\cite{jiao2022axiomatic} introduced the concept of `risk anonymity' as a potential property of RS rules. This property was also considered in \cite{denuit2022risk}, who call a RS rule that satisfies the risk anonymity property an `aggregate' RS rule. An aggregate (or risk-anonymous) RS rule is such that for any pool $\boldsymbol{X}$, the contributions $\boldsymbol{C}[\boldsymbol{X}]$ are measurable with respect to $\sigma(S_{\boldsymbol{X}})$. Hereafter, we will often refer to an aggregate RS rule as a `RS rule with aggregate contributions'. This leads to the following definition. 
\begin{definition}[Aggregate contributions]
A RS rule $\boldsymbol{C} $ is said to have aggregate contributions if for any pool $\boldsymbol{X}$ there exists a function
\[
\mathbf{h} : \mathbb{R} \rightarrow \mathbb{R}^n
\]
such that the contributions of $\boldsymbol{X}$ are given by 
\[
C_i[\boldsymbol{X}] = h_i(S_{\boldsymbol{X}}) \quad \text{for any } i = 1, \ldots, n.
\]
\end{definition}

This aggregate property for a RS rule means that the randomness of the contributions is solely due to the randomness of the aggregate loss. The realizations of the individual losses $X_{i}$'s are not revealed by communicating the realizations of the contributions; only the realization of the aggregate risk $S_{\boldsymbol{X}}$ is involved. Simply stated, the only unknown (random) information at time 0 that influences how the aggregate loss is shared among participants at time 1 is the future realization of that aggregate loss. Note that this does not limit the use of any probabilistic information on $\boldsymbol{X}$, as for every pool $\boldsymbol{X}$, the function $\mathbf{h}$ can be defined based on the joint distribution function of $\boldsymbol{X}$. For instance, under the mean-proportional RS rule, the aggregate contributions are determined such that for every $\boldsymbol{X}$, the $h_i$'s are defined as $h_i(s)= \frac{\mathbb{E}[X_i]}{\mathbb{E}[S_{\boldsymbol{X}}]}s$, for any $s$.


\cite{borch1960} defines a ``non-olet'' RS scheme as one in which the contribution vector is measurable with respect to the aggregate loss of the pool under consideration. This concept is closely related to our definition of an aggregate RS rule: an aggregate RS rule leads to a 'non-olet' RS scheme for any pool. For more information on 'non-olet' schemes, we refer to  \cite{feng2023peer} and \cite{feng2024unified}. An explanation of why these schemes are called ``non-olet'' can be found in \cite{feng2023book}, which is the first actuarial book focusing on decentralized risk-sharing.


Participants may agree to disregard the probabilistic characteristics of individual risks and focus solely on the realized aggregate loss, distributing it among themselves without regard to individual risk profiles. To capture this concept, we define a specific subclass of aggregate RS rules, referred to as `strongly aggregate RS rules' or `RS rules with strongly aggregate contributions'.

\begin{definition}[Strongly aggregate contributions]\label{def:stong_agg}
A RS rule $\boldsymbol{C} $ is said to have strongly aggregate contributions if there exists a function
\[
\mathbf{h} : \mathbb{R} \rightarrow \mathbb{R}^n
\]
such that the contributions of any pool $\boldsymbol{X}$ are given by
\[
C_i[\boldsymbol{X}] = h_i(S_{\boldsymbol{X}}) \quad \text{for any } i = 1, \ldots, n.
\]
\end{definition}

The contributions of a strongly aggregate RS rule are also $\sigma(S_{\boldsymbol{X}})$-measurable, which means that any RS rule with strongly aggregate contributions qualifies as an aggregate RS rule. However, the inverse implication is not necessarily true. For instance, the uniform RS rule has strongly aggregate contributions, and hence is also an aggregate RS rule. In contrast, the mean-proportional RS rule is an aggregate RS rule, but there exists no function \(\mathbf{h} : \mathbb{R} \rightarrow \mathbb{R}^n\) such that $\boldsymbol{C}^{\text{mean}}[\boldsymbol{X}] = \mathbf{h}(S_{\boldsymbol{X}})$ for any $\boldsymbol{X} \in \chi^n$, which implies that the mean-proportional RS rule is not strongly aggregate.

\begin{proposition}\label{prop:agg-source}
If the contributions of a RS rule are strongly aggregate, then these contributions are also source-anonymous.
\end{proposition}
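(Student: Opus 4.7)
The plan is to chase the definitions directly, exploiting the single fact that the aggregate loss is permutation-invariant. The key observation is that for any pool $\boldsymbol{X} \in \chi^n$ and any reshuffle $\boldsymbol{X}^{\pi}$, the aggregate loss satisfies
\[
S_{\boldsymbol{X}^{\pi}} = \sum_{i=1}^n X_{\pi(i)} = \sum_{i=1}^n X_i = S_{\boldsymbol{X}},
\]
since a permutation only reorders the summands. This is the entire content of the argument; everything else is bookkeeping.

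From there, I would simply invoke Definition~\ref{def:stong_agg}. Strong aggregation supplies a single function $\mathbf{h}:\mathbb{R} \to \mathbb{R}^n$ such that $C_i[\boldsymbol{Y}] = h_i(S_{\boldsymbol{Y}})$ for \emph{every} pool $\boldsymbol{Y}$, in particular both for $\boldsymbol{X}$ and for $\boldsymbol{X}^{\pi}$. Applying this with the equality above yields
\[
C_i[\boldsymbol{X}^{\pi}] = h_i(S_{\boldsymbol{X}^{\pi}}) = h_i(S_{\boldsymbol{X}}) = C_i[\boldsymbol{X}]
\]
for each $i$, which is exactly the source-anonymity property of Definition~\ref{def:source_anonymous}.

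There is no real obstacle here. The only subtlety worth flagging is that the definition of strong aggregation requires the \emph{same} function $\mathbf{h}$ for every pool, not a pool-specific $\mathbf{h}_{\boldsymbol{X}}$ as in the merely aggregate case; this uniformity is precisely what makes the step $h_i(S_{\boldsymbol{X}^{\pi}}) = h_i(S_{\boldsymbol{X}})$ legitimate and explains why the analogous implication fails for general aggregate RS rules (as the mean-proportional example in the preceding paragraph illustrates).
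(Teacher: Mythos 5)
Your proof is correct and is precisely the ``straightforward'' argument the paper alludes to without writing out: the single, pool-independent function $\mathbf{h}$ combined with the permutation-invariance of $S_{\boldsymbol{X}}$ gives $C_i[\boldsymbol{X}^{\pi}] = h_i(S_{\boldsymbol{X}^{\pi}}) = h_i(S_{\boldsymbol{X}}) = C_i[\boldsymbol{X}]$. Your remark that the uniformity of $\mathbf{h}$ across pools is what makes the argument work (and why it fails for merely aggregate rules such as the mean-proportional one) matches the paper's own discussion following the proposition.
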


The proof of this proposition is straightforward. Note that the inverse implication of the proposition is not generally true. For example, consider the order statistics RS rule $\boldsymbol{C}^{\mathrm{ord}}$ defined above. It is easy to verify that $\boldsymbol{C}^{\mathrm{ord}}$ is source-anonymous but not aggregate and hence, also not strongly aggregate. The mean-proportional RS rule is an example of an aggregate RS rule that does not exhibit source-anonymous contributions. 

The conditional mean RS rule was introduced in \cite{denuit2012convex}. Under this rule, the contribution of each participant $i$ is defined as
\begin{equation}
    C_{i}\left[  \boldsymbol{X}\right] = \mathbb{E}[X_i | S_{\textbf{X}}].
\end{equation}
This rule provides an example of a RS rule that is aggregate but neither strongly aggregate nor source-anonymous. Note also that this rule is not a $q$-proportional RS rule as there exists no risk metric $q$ such that the conditional mean RS rule can be expressed as a $q$-proportional RS rule.

Another simple rule that may be useful for illustration is the all-in-one RS rule which is defined as
\[
\boldsymbol{C}[\boldsymbol{X}] = \left(S_{\boldsymbol{X}}, 0, 0, \ldots, 0\right),
\]
where only one participant, say the first economic agent, is fully responsible for the total loss. Indeed, this RS rule has source-anonymous and strongly aggregate contributions but does not satisfy the reshuffling property since we are determining the contributions of the participants  based on their position in the risk vector (which could reflect their identity) rather than on their individual risks. During the COVID-19 pandemic, many governments around the world implemented large-scale economic relief programs that applied risk-sharing similar to an all-in-one RS rule. The government took on the role of the primary economic agent responsible for providing financial supports to struggling sectors and the general public health needs and safety to mitigate a devastating economic fallout.

We summarize the discussion of the properties satisfied by some RS rules in Table~\ref{table:properties}. A checkmark (\checkmark) means 'property is always satisfied,' while a dash (-) means 'property is not always satisfied'. We note that the mean-proportional RS and the scenario-based proportional RS rules defined in the previous section belong to the class of $q$-proportional RS rules. Similarly, the covariance-based linear RS and the scenario-based linear RS rules fall under the class of $(q_1, q_2)$-based linear RS rules. The members of these two classes always satisfy the properties of reshuffling and aggregate contributions, but they do not necessary satisfy the properties of source-anonymous contributions and strongly aggregate contributions. We will modify these latter properties as necessary in order to fully characterize these respective classes.

\begin{table}[htbp]
\caption{Properties of some risk-sharing rules}\label{table:properties}
\centering
\scalebox{0.95}{
\begin{tabular}{|l|c|c|c|c|} 
\hline
              &    & Source-  & & Strongly \\ 
Risk-sharing (RS)  &    & anonymous & Aggregate & aggregate \\
rules  & Reshuffling & contributions & contributions & contributions \\
\hline
Order statistics RS & $-$  & \checkmark  & $-$  & $-$  \\
\hline
Conditional mean RS & \checkmark  & $-$  & \checkmark  & $-$  \\
\hline
Mean-proportional RS & \checkmark  & $-$\  & \checkmark  & $-$  \\
\hline
Scenario-based proportional RS & \checkmark  & $-$\  & \checkmark  & $-$  \\
\hline
Scenario-based linear RS & \checkmark  & $-$\  & \checkmark  & $-$  \\
\hline
All-in-one RS & $-$  & \checkmark  & \checkmark  & \checkmark  \\
\hline
Uniform RS  & \checkmark  & \checkmark  & \checkmark  & \checkmark  \\
\hline
\end{tabular}}
\end{table}

\section{Characterizing the uniform RS rule}\label{sec:axiom_uni}

In this section, we present two straightforward (axiomatic) characterizations of the uniform RS rule. We start by demonstrating that $\boldsymbol{C}^{\mathrm{uni}}$ can
be characterized as the RS rule satisfying the `reshuffling property' and
having `source-anonymous contributions'.

\begin{theorem}\label{thrm:axiom-uni-1}
Risk-sharing rule $\boldsymbol{C}$ is the uniform RS rule if and only if it satisfies the following two axioms:
\begin{enumerate}
    \item[1:] $\boldsymbol{C}$ satisfies the reshuffling property.
   
     \item[2:] $\boldsymbol{C}$ has source-anonymous contributions.
\end{enumerate}
\end{theorem}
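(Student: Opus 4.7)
The plan is to verify the two directions of the biconditional separately, with the forward (``only if'') direction being essentially a sanity check and the reverse direction being the substantive content.

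For the forward direction, I would fix the uniform rule $\boldsymbol{C}^{\mathrm{uni}}$ and observe that for any pool $\boldsymbol{X}$ and permutation $\pi$ the aggregate loss is invariant, $S_{\boldsymbol{X}^\pi}=S_{\boldsymbol{X}}$, so $C_i^{\mathrm{uni}}[\boldsymbol{X}^\pi]=S_{\boldsymbol{X}}/n=C_j^{\mathrm{uni}}[\boldsymbol{X}]$ for any $i,j$. In particular this yields both $C_i^{\mathrm{uni}}[\boldsymbol{X}^\pi]=C_{\pi(i)}^{\mathrm{uni}}[\boldsymbol{X}]$ (reshuffling) and $C_i^{\mathrm{uni}}[\boldsymbol{X}^\pi]=C_i^{\mathrm{uni}}[\boldsymbol{X}]$ (source-anonymity).

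For the reverse direction, suppose $\boldsymbol{C}$ satisfies both axioms. Fix $\boldsymbol{X}\in\chi^n$ and an arbitrary permutation $\pi$. Combining the reshuffling identity $C_i[\boldsymbol{X}^\pi]=C_{\pi(i)}[\boldsymbol{X}]$ with the source-anonymous identity $C_i[\boldsymbol{X}^\pi]=C_i[\boldsymbol{X}]$ gives
\[
C_{\pi(i)}[\boldsymbol{X}]=C_i[\boldsymbol{X}],\qquad i=1,\dots,n.
\]
Specialising $\pi$ to the transposition swapping any chosen pair $i,j$ yields $C_i[\boldsymbol{X}]=C_j[\boldsymbol{X}]$ almost surely; hence all $n$ contributions coincide as random variables. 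The full allocation condition \eqref{eq2} then forces the common value to be $S_{\boldsymbol{X}}/n$, which is exactly $\boldsymbol{C}^{\mathrm{uni}}[\boldsymbol{X}]$. Since $\boldsymbol{X}$ was arbitrary, $\boldsymbol{C}=\boldsymbol{C}^{\mathrm{uni}}$.

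There is no real obstacle: the whole argument is driven by chaining the two axiom-based equalities and invoking full allocation. The only item that deserves a sentence of care is the choice of $\pi$ as a transposition, since it is this single step that converts the permutation-orbit invariance coming from the two axioms together into the pointwise equality $C_i[\boldsymbol{X}]=C_j[\boldsymbol{X}]$ for every pair. Everything else is bookkeeping.
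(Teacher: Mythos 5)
Your proposal is correct and follows essentially the same route as the paper's proof: chain the source-anonymity and reshuffling identities to get $C_{\pi(i)}[\boldsymbol{X}]=C_i[\boldsymbol{X}]$ for all $\pi$, conclude all contributions coincide, and invoke the full allocation condition. The only cosmetic differences are that you verify the ``only if'' direction directly rather than citing the paper's property table, and you make explicit the (sound) choice of $\pi$ as a transposition.
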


\begin{proof}
The ``only if'' statement follows from Table~\ref{table:properties}. Let us now prove the ``if'' statement. Consider the pool $\boldsymbol{X}\in\chi^{n}$ and its reshuffle $\boldsymbol{X}^{\pi}$. Since the contributions under $\boldsymbol{C}$
are assumed to be source-anonymous, we find that
\[
C_{i}\left[  \boldsymbol{X}\right]  =C_{i}\left[  \boldsymbol{X}^{\pi}\right]
\text{, \quad for any }i=1,\ldots,n.
\]
Taking into account the reshuffling property, these expressions can be
rewritten as
\[
C_{i}\left[  \boldsymbol{X}\right]  =C_{\pi(i)}\left[  \boldsymbol{X}\right]
\text{, \quad for any }i=1,\ldots,n.
\]
Since these expressions hold for any permutation $\pi$, it follows that all
contributions must be equal:
\[
C_{1}\left[  \boldsymbol{X}\right]  =C_{2}\left[  \boldsymbol{X}\right]
=\ldots=C_{n}\left[  \boldsymbol{X}\right].
\]
We arrive at the desired result by taking into account the full allocation condition (\ref{eq2}). 
\end{proof}

\bigskip

Theorem~\ref{thrm:axiom-uni-1} demonstrates that the uniform RS can be fully characterized by the reshuffling and source-anonymous contributions properties. This rule only requires the realization of the aggregate loss to determine each participant's contribution. Under conditions 1 and 2 in Theorem~\ref{thrm:axiom-uni-1}, there is no need for information about the individual losses or the multivariate distribution of the pool $\boldsymbol{X}$ to determine the contributions.

The characterization of the uniform RS rule in Theorem~\ref{thrm:axiom-uni-1} is somewhat trivial. Indeed, if reshuffling the losses reshuffles the contributions in the same way, while this reshuffling does not change the initial contributions, then all contributions must be equal to the contributions of the uniform RS rule, due to the full allocation condition. Further in the paper, the characterization in Theorem~\ref{thrm:axiom-uni-1} will serve as a `source of inspiration' for characterizing other RS rules. Hereafter, we will consider several adaptations of `source anonymity' that will lead to other RS rules.  

The elements of a set of properties (or axioms) are said to be 'independent' if not any of the properties contained in this set follows from the other properties in that set. Due to this 'non-redundancy', a set of independent axioms is sometimes called a `non-redundant' set of axioms.

\begin{proposition}\label{prop:theorem1}
The axioms 1 and 2 considered in Theorem~\ref{thrm:axiom-uni-1} are independent.
\end{proposition}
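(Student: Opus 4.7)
The plan is to establish independence by exhibiting, for each of the two axioms, a risk-sharing rule that satisfies that axiom but not the other. Since Theorem~\ref{thrm:axiom-uni-1} shows that together the two axioms force the rule to be $\boldsymbol{C}^{\mathrm{uni}}$, it is enough to produce two rules, distinct from the uniform RS rule, each satisfying exactly one of the two axioms.

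For the first direction, I would take the mean-proportional RS rule $\boldsymbol{C}^{\mathrm{mean}}$ as a witness that axiom~1 (reshuffling) does not imply axiom~2 (source-anonymous contributions). According to Table~\ref{table:properties}, $\boldsymbol{C}^{\mathrm{mean}}$ satisfies the reshuffling property but fails source-anonymity. To make the proof self-contained I would verify this by a concrete small example: take $n=2$ and any pool $\boldsymbol{X}=(X_1,X_2)\in \chi^2$ with $\mathbb{E}[X_1]\neq\mathbb{E}[X_2]$ and both expectations positive; for the reshuffle $\boldsymbol{X}^\pi=(X_2,X_1)$ a direct calculation of $\boldsymbol{C}^{\mathrm{mean}}[\boldsymbol{X}^\pi]$ using \eqref{eq:meanRS} shows $C_1^{\mathrm{mean}}[\boldsymbol{X}^\pi]\neq C_1^{\mathrm{mean}}[\boldsymbol{X}]$, contradicting Definition~\ref{def:source_anonymous}, while a direct verification shows that $C_i^{\mathrm{mean}}[\boldsymbol{X}^\pi]=C_{\pi(i)}^{\mathrm{mean}}[\boldsymbol{X}]$ always holds because the ratio $\mathbb{E}[X_{\pi(i)}]/\mathbb{E}[S_{\boldsymbol{X}}]$ is carried along with the identity of the participant.

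For the converse direction, I would use the order statistics RS rule $\boldsymbol{C}^{\mathrm{ord}}$ defined in \eqref{order_stat} as a witness that axiom~2 does not imply axiom~1. Again Table~\ref{table:properties} (and the remark immediately following that table and following \eqref{order_stat}) tells us $\boldsymbol{C}^{\mathrm{ord}}$ is source-anonymous but not reshuffling-compatible; the source-anonymity is immediate because the ordered tuple $(X_{(1)},\ldots,X_{(n)})$ depends only on the unordered multiset $\{X_1,\ldots,X_n\}$ and so is invariant under any permutation of the pool, while failure of reshuffling can be shown with any deterministic pool for which the components are not already in ascending order (for instance $\boldsymbol{X}=(1,0)$: here $\boldsymbol{C}^{\mathrm{ord}}[\boldsymbol{X}]=(0,1)$, but the reshuffling property would demand $C_1^{\mathrm{ord}}[\boldsymbol{X}^\pi]=C_{\pi(1)}^{\mathrm{ord}}[\boldsymbol{X}]$, which fails for the swap $\pi=(2,1)$).

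I do not anticipate any serious obstacle; the entire argument is a look-up of Table~\ref{table:properties} backed by one or two sentence-long direct verifications. The only point requiring minor care is to ensure the chosen witnesses are elements of the class of RS rules under consideration (that is, the underlying convex cone $\chi$ is taken so that $\boldsymbol{C}^{\mathrm{mean}}$ is well-defined, e.g.\ $\chi\subseteq L^1_+$), so that neither counterexample is vacuous.
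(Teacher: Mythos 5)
Your proof is correct and follows essentially the same strategy as the paper: exhibit, for each axiom, a rule satisfying it but not the other, using the order statistics rule $\boldsymbol{C}^{\mathrm{ord}}$ for the second direction exactly as the paper does. The only difference is your first witness: the paper uses the stand-alone rule $\boldsymbol{C}[\boldsymbol{X}]=\boldsymbol{X}$, which is marginally cleaner than your choice of $\boldsymbol{C}^{\mathrm{mean}}$ since it is defined for every pool without any integrability or non-degeneracy caveats, but your witness works equally well.
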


\begin{proof}
To prove the proposition, we need to show that for each of the two properties, there exists a RS rule that satisfies this property but not the other. Consider the stand-alone RS rule $\boldsymbol{C}$ defined by
\begin{equation}
\boldsymbol{C}[\boldsymbol{X}] = \boldsymbol{X}, \quad \text{ for any pool } \boldsymbol{X}. \label{UR1}
\end{equation}
This rule satisfies the reshuffling property but lacks source-anonymous contributions. Conversely, the order statistics RS rule $\boldsymbol{C}^{\text{ord}}$ has source-anonymous contributions but does not satisfy the reshuffling property.
\end{proof}

Taking into account that the uniform RS rule satisfies axioms 1 and 2, the independence of these axioms implies that neither of them alone is sufficient to characterize the uniform RS rule. 

Let us now prove a second characterization of the uniform RS rule.

\begin{theorem}\label{thrm:axiom-uni-2}
Risk-sharing rule $\boldsymbol{C}$ is the uniform RS rule if and only if the following two axioms hold:
\begin{enumerate}
    \item[1:] $\boldsymbol{C}$ satisfies the reshuffling property.
    \item[3:] $\boldsymbol{C}$ has strongly aggregate contributions.
\end{enumerate}
\end{theorem}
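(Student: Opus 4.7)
The plan is to deduce Theorem~\ref{thrm:axiom-uni-2} as an almost immediate corollary of Theorem~\ref{thrm:axiom-uni-1} and Proposition~\ref{prop:agg-source}. The ``only if'' direction is obtained by reading off the relevant row of Table~\ref{table:properties}: the uniform RS rule is shown there to satisfy both the reshuffling property and the strongly aggregate contributions property.

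For the ``if'' direction, the key observation is that strongly aggregate contributions already imply source-anonymous contributions, so axiom 3 is strictly stronger than axiom 2 of Theorem~\ref{thrm:axiom-uni-1}. Concretely, suppose $\boldsymbol{C}$ has strongly aggregate contributions with witnessing function $\mathbf{h}:\mathbb{R}\rightarrow\mathbb{R}^n$, and let $\boldsymbol{X}\in\chi^n$ with any reshuffle $\boldsymbol{X}^\pi$. Since $S_{\boldsymbol{X}^\pi}=S_{\boldsymbol{X}}$ (the aggregate is permutation-invariant), we get $C_i[\boldsymbol{X}^\pi]=h_i(S_{\boldsymbol{X}^\pi})=h_i(S_{\boldsymbol{X}})=C_i[\boldsymbol{X}]$ for every $i$, which is exactly source-anonymity. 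This observation is just Proposition~\ref{prop:agg-source} restated, so I would simply invoke it.

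Combining this with the assumed reshuffling property, $\boldsymbol{C}$ satisfies both axioms 1 and 2 of Theorem~\ref{thrm:axiom-uni-1}, and therefore $\boldsymbol{C}=\boldsymbol{C}^{\mathrm{uni}}$ by that theorem. There is no genuine obstacle here — the entire content of the theorem reduces to noting that axiom 3 implies axiom 2 via Proposition~\ref{prop:agg-source}. The only drafting point to be mindful of is to keep the two implications cleanly separated and to explicitly flag that Theorem~\ref{thrm:axiom-uni-2} is therefore a strict weakening of Theorem~\ref{thrm:axiom-uni-1} in the sense that its hypotheses are stronger; it would then be natural to follow the proof with a short independence remark parallel to Proposition~\ref{prop:theorem1}, exhibiting one RS rule that is reshuffling-invariant but not strongly aggregate (e.g.\ the stand-alone rule $\boldsymbol{C}[\boldsymbol{X}]=\boldsymbol{X}$) and one that is strongly aggregate but not reshuffling-invariant (e.g.\ the all-in-one RS rule).
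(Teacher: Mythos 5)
Your proof is correct and follows exactly the paper's own route: the ``only if'' direction is read off Table~\ref{table:properties}, and the ``if'' direction reduces to Theorem~\ref{thrm:axiom-uni-1} via Proposition~\ref{prop:agg-source} (strongly aggregate $\Rightarrow$ source-anonymous). Your added independence remark likewise matches the paper's subsequent proposition, which uses the same two counterexamples.
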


\begin{proof}
The ``if'' statement follows directly from Proposition~\ref{prop:agg-source} and Theorem~\ref{thrm:axiom-uni-1}. Conversely, the ``only if'' statement follows from Table~\ref{table:properties}.
\end{proof}

Let us now prove the independence of the two axioms considered in Theorem~\ref{thrm:axiom-uni-2}.

\begin{proposition}
The axioms 1 and 3 considered in Theorem~\ref{thrm:axiom-uni-2} are independent.
\end{proposition}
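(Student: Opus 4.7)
The plan is to mimic the proof of Proposition~\ref{prop:theorem1} and exhibit, for each of the two axioms, an explicit RS rule that satisfies that axiom while violating the other. Since independence is symmetric, both directions need a separate witness, and I expect both witnesses to be readily available among the RS rules already introduced in the paper.

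First, to show that reshuffling does not imply strongly aggregate contributions, I would reuse the stand-alone rule $\boldsymbol{C}[\boldsymbol{X}]=\boldsymbol{X}$ from the proof of Proposition~\ref{prop:theorem1}. It trivially satisfies the reshuffling property (the proof of Proposition~\ref{prop:theorem1} already noted this). To see that it is not strongly aggregate, suppose for contradiction that there exists $\mathbf{h}:\mathbb{R}\to\mathbb{R}^n$ with $X_i=h_i(S_{\boldsymbol{X}})$ for every pool $\boldsymbol{X}\in\chi^n$. Picking two pools with the same aggregate loss but different individual components (e.g. $(X_1,X_2,0,\ldots,0)$ versus $(X_2,X_1,0,\ldots,0)$ for two distinct $X_1,X_2\in\chi$) immediately yields a contradiction.

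Second, to show that strong aggregation does not imply reshuffling, I would use the all-in-one RS rule $\boldsymbol{C}[\boldsymbol{X}]=(S_{\boldsymbol{X}},0,\ldots,0)$ discussed at the end of Section~\ref{sec:properties}. Defining $h_1(s)=s$ and $h_i(s)=0$ for $i\ge 2$ shows that its contributions are strongly aggregate. It fails the reshuffling property: for any permutation $\pi$ with $\pi(1)\neq 1$ and any pool $\boldsymbol{X}$ where $X_1$ and $X_{\pi(1)}$ differ on a set of positive probability, one has $C_1[\boldsymbol{X}^\pi]=S_{\boldsymbol{X}^\pi}=S_{\boldsymbol{X}}\neq 0=C_{\pi(1)}[\boldsymbol{X}]$ whenever $S_{\boldsymbol{X}}$ is not almost surely zero.

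There is no real obstacle here; the only thing to be slightly careful about is that each counterexample is an honest RS rule on $\chi^n$ (i.e.\ always produces a contribution vector in $\mathcal{A}_{\boldsymbol{X}}$), which is immediate in both cases from the full allocation condition $\sum_i C_i[\boldsymbol{X}]=S_{\boldsymbol{X}}$. The argument is therefore expected to be as short as the proof of Proposition~\ref{prop:theorem1} and follows exactly the same template.
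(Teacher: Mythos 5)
Your proposal is correct and uses exactly the same two witnesses as the paper's own proof: the stand-alone rule $\boldsymbol{C}[\boldsymbol{X}]=\boldsymbol{X}$ (reshuffling but not strongly aggregate) and the all-in-one rule $(S_{\boldsymbol{X}},0,\ldots,0)$ (strongly aggregate but not reshuffling). The paper merely asserts these facts without the explicit verifications you supply, so your write-up is the same argument in slightly more detail.
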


\begin{proof}
The stand-alone RS rule $\boldsymbol{C} $, as defined in (\ref{UR1}), satisfies the reshuffling property but is not strongly aggregate.
The all-in-one RS rule defined by
\[
\boldsymbol{C}[\boldsymbol{X}] = \left(S_{\boldsymbol{X}}, 0, 0, \ldots, 0\right),
\]
is strongly aggregate, while it violates the reshuffling property.
\end{proof}

From Theorem 2 and Proposition 3, we can conclude that neither of axioms 1 and 3 alone is sufficient to characterize the uniform RS rule. 

From Proposition~\ref{prop:agg-source}, we know that a strongly aggregate RS rule always exhibits source-anonymous contributions. This implies that axiom 2 in Theorem~\ref{thrm:axiom-uni-1} is a weaker requirement than axiom 3 in Theorem~\ref{thrm:axiom-uni-2}. In other words, there are more RS rules that satisfy property 2 than property 3. As a result, the ``only if'' statement (i.e., the `$\Rightarrow$' implication) of Theorem~\ref{thrm:axiom-uni-2} is stronger than the corresponding implication in Theorem~\ref{thrm:axiom-uni-1}. On the other hand, the more important ``if'' statement (i.e., the `$\Leftarrow$' implication) is stronger in Theorem~\ref{thrm:axiom-uni-1} than in Theorem~\ref{thrm:axiom-uni-2}. 

In Theorem~\ref{thrm:axiom-uni-2}, we have shown that a strongly aggregate RS rule satisfying the reshuffling property must be the uniform RS rule. However, notice that in Theorem~\ref{thrm:axiom-uni-2}, we cannot replace the strongly aggregate property with the weaker property that the RS rule is an aggregate RS rule. A simple example illustrating this remark is the mean-proportional RS rule, which is an aggregate RS rule and satisfies the reshuffling property, yet clearly differs from the uniform RS rule. 

\section{Characterizing $q$-proportional RS rules}\label{section:charac_qprop}

In this section, we explore two characterizations of $q$-proportional RS rules. These characterizations are derived by modifying the axioms presented in Theorem~\ref{thrm:axiom-uni-1} and Theorem~\ref{thrm:axiom-uni-2}, which led to the uniform RS rule.

We know that $q$-proportional RS rules always satisfy the reshuffling property but do not always meet the source-anonymous contributions property. This means that not all $q$-proportional RS rules satisfy the condition that $C_{i}\left[\boldsymbol{X}^{\pi}\right]  =C_{i}\left[\boldsymbol{X}\right]  $ for any participant $i$. To address this issue, we will modify the second property by defining a similar yet more suitable property for $q$-proportional RS rules. Therefore, for each participant $i$ and for each risk metric $q$, we first introduce the `contribution-over-$q$ ratio' $\frac{C_{i}\left[\boldsymbol{X}\right]
}{q\left[ X_{i}\right] }$, which measures the contribution of each participant $i$ in units of $q\left[ X_{i}\right]$. E.g., in case the risk metric $q$ is the expectation of the individual loss under consideration, then the contribution-over-$q$ ratio expresses the contribution of each participant in units of its expected loss. 

Consider, for example, the pool $(X,Y)$ consisting of two losses $X$ and $Y$ with different expectations. Requiring source-anonymous contributions, that is the contributions  $C_i[(Y,X)]$ and $C_i[(X,Y)]$ for both participants are equal, may not be appropriate or meaningful. The contribution-over-expectation ratio enables us to compare the relative sizes of contributions for different losses with varying expectations and it might be more appropriate to require that $C_1[(Y,X)]= \mathbb{E}[Y] / \mathbb{E}[X]  \times  C_1[(X,Y)]$ and $C_2[(Y,X)]= \mathbb{E}[X] / \mathbb{E}[Y]  \times  C_1[(X,Y)]$. This requirement means that in case of a reshuffle of the losses, only the relative change of the expectations of these losses is used  for determining the new contributions after the reshuffle. In other words, if in the original pool $(X,Y)$ the contribution of a participant is 2 times the expectation of his loss $X$, then the contribution of this participant in the reshuffled pool $(Y,X)$ is also twice the expectation of his newly attributed loss $Y$. In this simple example, we illustrated the use of the expectation as a risk metric to require how  contributions could be adapted after a reshuffle, but it is obvious that any other risk metric could be used as well in this perspective.

Let us now consider RS rules where we do not require that the contributions of
different participants remain unchanged when the individual losses are reshuffled. Instead, we require that the contribution ratios of the different participants remain unaffected by reshuffling the losses. 

\begin{definition}
[Source-anonymous contribution-over-$q$ ratios]\label{def:source_anonymous_ratios} Consider the risk metric $q : \chi \rightarrow \mathbb{R}_0^+$. The contribution-over-$q$ ratios of a RS rule 
$\boldsymbol{C}$  are said to be source-anonymous if, for any pool $\boldsymbol{X}$ and any of its reshufflings $\boldsymbol{X}^{\pi}$, the following conditions hold:
\[
{C_{i}\left[  \boldsymbol{X}^{\pi}\right]  }=\frac{q\left[  X_{\pi
(i)}\right]  }{q\left[
X_{i}\right]  }{C_{i}\left[  \boldsymbol{X}\right]  }\text{\quad for any }i=1,\ldots,n \text{\quad with } {q\left[  X_{i}\right] >0 }.
\]
\end{definition}

Remark that in case $q$ is a positive constant mapping, the source-anonymous contribution-over-$q$ ratios property reduces to the source-anonymous contributions property. 


When a RS rule exhibits source-anonymous contribution ratios, the specific allocations of the losses $X_{1},X_{2},\ldots,X_{n}$ to individual participants are irrelevant for determining their contribution ratios. In other words, the contribution ratios are independent of which participant incurs which specific loss. The determination of the contribution ratios depends solely on the collection of risks, not on the identity of the participant bearing the loss. For a RS rule with source-anonymous contribution ratios, only the risk metrics of the losses are required to adjust the original contributions after a reshuffle. For instance, if a participant's risk metric doubles after a reshuffle, then his contribution will also be doubled. 

It is easy to verify that any $q$-proportional RS rule has source-anonymous contribution-over-$q$ ratios. In particular, we find that both the uniform RS rule and the mean-proportional RS rules satisfy this property.

\begin{theorem}\label{thrm:axiom-prop-1}
Consider the risk metric $q : \chi \rightarrow \mathbb{R}_0^+$. A RS rule $\boldsymbol{C}$  is the $q$-proportional RS rule if and only if it satisfies the following two axioms:
\begin{enumerate}
    \item[1.] $\boldsymbol{C}$ satisfies the reshuffling property.
    \item[4.] $\boldsymbol{C}$ has source-anonymous contribution-over-$q$ ratios.
\end{enumerate}
\end{theorem}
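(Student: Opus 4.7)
The plan is to imitate the proof of Theorem~\ref{thrm:axiom-uni-1}, combining the two axioms to derive a relationship between the contributions of any two participants, and then invoking the full allocation condition to pin down the constant of proportionality.

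For the ``only if'' direction, I would verify both axioms directly from the definition of the $q$-proportional RS rule. Reshuffling follows since $q[X_{\pi(i)}]/\sum_k q[X_{\pi(k)}] = q[X_{\pi(i)}]/\sum_k q[X_k]$ and $S_{\boldsymbol{X}^\pi} = S_{\boldsymbol{X}}$, so $C_i^{\mathrm{prop}}[\boldsymbol{X}^\pi] = C_{\pi(i)}^{\mathrm{prop}}[\boldsymbol{X}]$. The source-anonymous contribution-over-$q$ ratios property follows by observing that $C_i^{\mathrm{prop}}[\boldsymbol{X}^\pi]/q[X_{\pi(i)}] = S_{\boldsymbol{X}}/\sum_k q[X_k] = C_i^{\mathrm{prop}}[\boldsymbol{X}]/q[X_i]$ whenever $q[X_i]>0$.

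For the ``if'' direction, fix a pool $\boldsymbol{X}$ with at least one $q[X_j] > 0$, and pick some index $i^\ast$ with $q[X_{i^\ast}] > 0$. For any $j \in \{1, \ldots, n\}$, choose a permutation $\pi$ with $\pi(i^\ast) = j$. Axiom~4 applied at coordinate $i^\ast$ gives
\[
C_{i^\ast}[\boldsymbol{X}^{\pi}] = \frac{q[X_j]}{q[X_{i^\ast}]}\, C_{i^\ast}[\boldsymbol{X}],
\]
while axiom~1 gives $C_{i^\ast}[\boldsymbol{X}^{\pi}] = C_{\pi(i^\ast)}[\boldsymbol{X}] = C_j[\boldsymbol{X}]$. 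Combining, $C_j[\boldsymbol{X}] = \frac{q[X_j]}{q[X_{i^\ast}]}\, C_{i^\ast}[\boldsymbol{X}]$ for every $j$ (note this also covers $q[X_j]=0$, forcing $C_j[\boldsymbol{X}]=0$). Summing over $j$ and invoking the full allocation condition~(\ref{eq2}) yields
\[
S_{\boldsymbol{X}} = \sum_{j=1}^n C_j[\boldsymbol{X}] = \frac{C_{i^\ast}[\boldsymbol{X}]}{q[X_{i^\ast}]} \sum_{j=1}^n q[X_j],
\]
so $C_{i^\ast}[\boldsymbol{X}] = q[X_{i^\ast}] S_{\boldsymbol{X}} / \sum_k q[X_k]$, and substituting back gives the $q$-proportional formula for every coordinate.

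The only subtle point, which I expect to be the main (small) obstacle, is that axiom~4 is stated only for indices $i$ with $q[X_i]>0$, so some care is needed to ensure the argument still reaches indices with $q[X_j]=0$. This is handled by anchoring the argument at a fixed index $i^\ast$ with $q[X_{i^\ast}]>0$ (which exists by the standing assumption on $\boldsymbol{X}$) and letting the permutation send $i^\ast$ to an arbitrary target position $j$; the resulting identity is then valid for all $j$, whether $q[X_j]$ is zero or not.
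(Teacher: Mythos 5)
Your proposal is correct and follows essentially the same route as the paper's proof: anchor at an index with strictly positive $q$-value, combine axioms 1 and 4 to relate every other contribution to that anchor, and then use the full allocation condition to identify the proportionality factor as $S_{\boldsymbol{X}}/\sum_k q[X_k]$. The subtlety you flag about reaching indices with $q[X_j]=0$ is handled in the paper in exactly the way you describe, by letting the permutation carry the positive-$q$ anchor to an arbitrary position.
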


\begin{proof}
The ``only if'' direction is straightforward. 
Let us now prove the ``if''
direction. 

Consider a pool $\boldsymbol{X}\in\chi^{n}$ and its
reshuffle $\boldsymbol{X}^{\pi}$. 
As the $q$-proportional RS rule is only defined in case not all  ${q\left[  X_{i}\right]}$ are equal to $0$, let us suppose that ${q\left[  X_{j}\right]>0}$. Given that the contribution ratios are assumed to be source-anonymous, we have that
\[
{C_{j}\left[  \boldsymbol{X}^{\pi}\right]  }=\frac{q\left[  X_{\pi
(j)}\right]  }{q\left[
X_{j}\right]  }{C_{j}\left[  \boldsymbol{X}\right]  }.
\]
Taking into account the reshuffling property, this leads to
\[
C_{\pi(j)}\left[\boldsymbol{X}\right] =\frac{q\left[  X_{\pi
(j)}\right]  }{q\left[
X_{j}\right]  }{C_{j}\left[  \boldsymbol{X}\right]  }.
\]
Since this expression holds for any permutation $\pi$, we have that 
\[
C_{k}\left[\boldsymbol{X}\right] =\frac{q\left[  X_{k}\right]  }{q\left[
X_{j}\right]  }{C_{j}\left[  \boldsymbol{X}\right]  }
\]
holds for any $k$. Hence, there exists a r.v. $Y$ such that 

\[
{C_{k}\left[  \boldsymbol{X}\right]  }={q\left[  X_{k}\right]
\times Y}\text{, \quad for any }k=1,\ldots,n.
\]
Taking into account
the full allocation condition (\ref{eq2}), we find that:
\[
Y=\frac{S_{\boldsymbol{X}}}{\sum_{k=1}^n q[X_k] },
\]
which shows that the contributions are equal to the contributions of the $q$-proportional RS rule. 
\end{proof}

In the following proposition, we show that the axioms considered in Theorem~\ref{thrm:axiom-prop-1} to characterize $q$-proportional RS rules are independent.

\begin{proposition}\label{prop:theorem3}
The axioms 1 and 4 considered in Theorem~\ref{thrm:axiom-prop-1} are independent.
\end{proposition}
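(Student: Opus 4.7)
The plan is to mirror the structure of Proposition~\ref{prop:theorem1} by exhibiting two RS rules, one satisfying axiom~1 but not axiom~4, and another satisfying axiom~4 but not axiom~1.

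For the first direction, I reuse the stand-alone rule $\boldsymbol{C}[\boldsymbol{X}]=\boldsymbol{X}$ from~(\ref{UR1}). Reshuffling is immediate from $C_i[\boldsymbol{X}^{\pi}]=X_{\pi(i)}=C_{\pi(i)}[\boldsymbol{X}]$. For axiom~4, the ratio $C_i[\boldsymbol{X}]/q[X_i]=X_i/q[X_i]$ is tied to the specific random variable sitting in position $i$, so for any non-trivial risk metric $q$ I can pick a two-dimensional pool with $X_1/q[X_1]\neq X_2/q[X_2]$; the transposition $\pi=(1\,2)$ then witnesses the failure of axiom~4.

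The second direction is more delicate. The obvious analogue of Proposition~\ref{prop:theorem1}'s counter-example, namely the order statistics rule, no longer works: for a generic $q$ one has $X_{(i)}/q[X_{\pi(i)}]\neq X_{(i)}/q[X_i]$. The underlying obstacle is the rigidity of axiom~4 combined with full allocation. Writing $r_k[\boldsymbol{X}]=C_k[\boldsymbol{X}]/q[X_k]$, a short calculation shows that axiom~4 forces $r_k[\boldsymbol{X}^{\pi}]=r_k[\boldsymbol{X}]$ for every $\pi$; hence comparing the full allocation equations for $\boldsymbol{X}$ and for its image under the transposition of positions $i$ and $j$ yields $(q[X_i]-q[X_j])(r_i-r_j)=0$. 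Consequently, on any pool whose $q$-values are not all identical, all ratios must be equal and the rule must agree with $\boldsymbol{C}^{\mathrm{prop}}$; any counter-example can deviate from $\boldsymbol{C}^{\mathrm{prop}}$ only on pools whose $q$-values coincide.

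Guided by this rigidity, I define, on the natural domain of $\boldsymbol{C}^{\mathrm{prop}}$, the piecewise rule
\[
\tilde{\boldsymbol{C}}[\boldsymbol{X}] = \begin{cases} (S_{\boldsymbol{X}},\,0,\,\ldots,\,0) & \text{if } q[X_1]=q[X_2]=\cdots=q[X_n]>0, \\ \boldsymbol{C}^{\mathrm{prop}}[\boldsymbol{X}] & \text{otherwise.} \end{cases}
\]
Since reshuffling preserves the multiset of $\boldsymbol{X}$, it also preserves the tied/untied dichotomy, so axiom~4 can be checked branch by branch: on the untied branch it is inherited from $\boldsymbol{C}^{\mathrm{prop}}$, while on the tied branch the ratios $r_1=S_{\boldsymbol{X}}/q[X_1]$ and $r_2=\cdots=r_n=0$ are symmetric functions of the multiset and hence unchanged under permutations. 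To defeat reshuffling, I choose any $X_1,X_2\in\chi$ with $q[X_1]=q[X_2]>0$ and $X_1+X_2\not\equiv 0$ (which is easily arranged for every risk metric considered in the paper): for $\pi=(1\,2)$ one obtains $\tilde{C}_1[\boldsymbol{X}^{\pi}]=S_{\boldsymbol{X}}\neq 0=\tilde{C}_2[\boldsymbol{X}]=\tilde{C}_{\pi(1)}[\boldsymbol{X}]$. The main difficulty of the proof is precisely the rigidity of axiom~4 just identified, which precludes the naive reuse of the order statistics rule and forces a piecewise construction.
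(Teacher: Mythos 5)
Your proposal is correct and follows essentially the same strategy as the paper's own proof: the stand-alone rule $\boldsymbol{C}[\boldsymbol{X}]=\boldsymbol{X}$ witnesses a rule with reshuffling but without source-anonymous contribution-over-$q$ ratios, and a piecewise rule that coincides with the $q$-proportional rule except on pools whose $q$-values all coincide witnesses the converse (the paper puts the order-statistics allocation on the tied branch where you put the all-in-one allocation; both are permutation-invariant there, so both work). Your preliminary rigidity argument, showing that axiom~4 together with full allocation forces agreement with $\boldsymbol{C}^{\mathrm{prop}}$ off the tied set and hence that any counterexample must be piecewise, is a nice addition not made explicit in the paper.
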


\begin{proof}
The stand-alone RS rule defined in \eqref{order_stat} is an example of a RS rule satisfying the reshuffling axiom, but in general not having source-anonymous contribution-over-$q$ ratios. \\
Next, consider the following RS rule
$\boldsymbol{C}$ in $\chi^{n}$:
\[
\boldsymbol{C}\left[  \boldsymbol{X}\right]  =\left\{
\begin{array}
[c]{cc}%
 (X_{(1)}, X_{(2)}, \ldots, X_{(n)}), &
:\text{ if }q[X_{1}]=q[X_{2}]=\ldots=q[X_{n}]\\ \left(\frac{q[X_1]}{\sum_{k=1}^n q[X_k]} S_{\boldsymbol{X}}, \frac{q[X_2]}{\sum_{k=1}^n q[X_k]} S_{\boldsymbol{X}}, \ldots, \frac{q[X_n]}{\sum_{k=1}^n q[X_k]} S_{\boldsymbol{X}}\right)& :\text{ otherwise, }
\end{array}
\right.
\]
where $X_{(i)}$ is the \(i\)-th order statistic of the pool under consideration. This RS rule has source-anonymous contribution-over-$q$ ratios and thus satisfies axiom 4, but it does not satisfy the reshuffling axiom 1.
\end{proof}

In the following example, we present a direct extension of the $q$-proportional RS rule which accounts for the level of reliability of the data provided by each participant $i$ to determine the value of his risk metric $q[X_i]$. 

\begin{example} [Weighted $q$-proportional RS rule]
    Let $w_1, \dots, w_n$ be positive real numbers. The weighted $q$-proportional RS rule  is such that the contribution of participant $i$ is given by
\begin{equation}\label{eq:w_q_RS}
    C_i[\mathbf{X}] = \frac{w_i q[X_i]}{\sum_{k=1}^{n}w_k q[X_k]} S_{\mathbf{X}}.
\end{equation}
With this RS rule, the contribution to be paid by each participant $i$ is not only based on the aggregate claims, but also on all the participants' identities (through the $w_j$'s) and their random losses (through the $q[X_j]$'s). The $w_j$'s can be interpreted as participant-specific corrections applied to the risk metrics, the $q[X_j]$'s. For instance, the data quality provided by participant $j$, whether good or poor, might result in the same estimation of $q[X_j]$, but it could influence its trustworthiness. In such a situation, the correction factor $w_j$ may be used as an adjustment factor to account for the model uncertainty in the calculation of $q[X_j]$. Suppose, for example, that participant $i$ is delivering poorer quality data. This will then typically result in a higher $w_i$  and thus lead to a higher contribution $C_i[\mathbf{X}]$. 
It is clear that if $w_1, \dots, w_n$ are not all equal, then the weighted $q$-proportional RS rule satisfies neither the reshuffling axiom 1 nor the source-anonymous contribution-over-$q$ ratios axiom 4, implying that this RS rule does not belong to the general class of $q$-proportional RS rules. 
\end{example}

In the previous section, we introduced a second characterization of the
uniform RS rule based on the reshuffling property and the strongly aggregate
contributions property. It is evident that the latter property does not apply to the
$q$-proportional RS rule when $q$ is a non-constant mapping. 
Hereafter, we introduce the strongly aggregate contribution-over-$q$ ratios property, which is an adaptation of the strongly aggregate contributions property, modified to align with the $q$-proportional RS rule.
Notice that a risk metric  $q : \chi \rightarrow \mathbb{R}_0^+$ is said to be normalized if $q[0]=0$ and additive if
$
q\left[\sum_{k=1}^{n}X_{k}\right]=\sum_{k=1}^{n}q[X_{k}]\text{,  for any
}\boldsymbol{X}\in\chi^{n}$. 
Examples of normalized and additive risk metrics are
\(
q[X]=\mathbb{E}\left[  X\right]
\)
and 
\(
q[X]=X\left(  \omega^{\ast}\right)   
\) for any $X\in\chi$, and
where $\omega^{\ast}$ is a `typical' scenario.

\begin{definition}\label{def:strong_agg_ratio}
[Strongly aggregate contribution-over-$q$ ratios] Consider the normalized and additive risk metric $q : \chi \rightarrow \mathbb{R}_0^+$. A RS rule $\boldsymbol{C}$ is said to have strongly aggregate contribution-over-$q$ ratios if there exists a
function
\[
\mathbf{h}: \mathbb{R}^{2} \rightarrow \mathbb{R}^{n}
\]
such that for any $\boldsymbol{X}$ with at least one ${q\left[  X_{j}\right] >0 }$, the contributions can be  expressed as:
\[
C_{i}\left[  \boldsymbol{X}\right]={q\left[  X_{i}\right]}\times h_{i}\left(  S_{\boldsymbol{X}},q\left[  S_{\boldsymbol{X}}\right]
\right)  \text{,\quad for any }i=1,\ldots,n.
\]
\end{definition}

It follows immediately that for any additive risk metric $q$, the $q$-proportional RS rule has strongly aggregate contribution-over-$q$ ratios. To illustrate, by choosing the expectation as the risk metric $q$, we can write the contribution formula corresponding to the mean-proportional RS rule as follows:
\begin{align*} 
C_i^\mathrm{mean}[\mathbf{X}] & = \mathbb{E}[X_i] \cdot \frac{S_{\mathbf{X}}}{\mathbb{E}[S_{\mathbf{X}}]} \\
& =\mathbb{E}[X_i] \cdot h_i(S_{\mathbf{X}}, \mathbb{E}[S_{\mathbf{X}}]), \quad \text{for any } i=1,\ldots,n.
\end{align*}
This formula highlights that the contribution of participant $i$  is proportional to the expected loss $\mathbb{E}[X_i]$, adjusted by a factor $h_i(S_{\mathbf{X}}, \mathbb{E}[S_{\mathbf{X}}])$, the proportion of the aggregate loss $S_{\mathbf{X}}$ to its expectation $\mathbb{E}[S_{\mathbf{X}}]$. This ensures that contributions under this rule are suitably scaled relative to the group's overall risk exposure.


Now we can prove a second characterization of the $q$-proportional RS rule when $q$ is additive.

\begin{theorem}\label{thrm:axiom-prop-2}
Consider the normalized and additive risk metric $q : \chi \rightarrow \mathbb{R}_0^+$. A RS rule $\boldsymbol{C}$ is the
$q$-proportional RS rule if and only if it satisfies the following
axiom:
\begin{enumerate} 
\item[5.]
$\boldsymbol{C}$ has strongly aggregate contribution-over-$q$ ratios.\
\end{enumerate}

\end{theorem}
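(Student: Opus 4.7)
The plan is to handle the two directions separately, with the forward (``only if'') direction being a direct verification and the converse being the substantive part.

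For the forward direction, I would start from the definition of the $q$-proportional rule and exploit the additivity and normalization of $q$. Since $q\left[S_{\boldsymbol{X}}\right] = \sum_{k=1}^n q[X_k]$ (additivity, combined with normalization to handle any implicit $0$ components), I can rewrite
\[
C_i^{\mathrm{prop}}[\boldsymbol{X}] = \frac{q[X_i]}{\sum_{k=1}^n q[X_k]} S_{\boldsymbol{X}} = q[X_i] \cdot \frac{S_{\boldsymbol{X}}}{q\left[S_{\boldsymbol{X}}\right]},
\]
so the choice $h_i(s,t) = s/t$ (independent of $i$) exhibits strongly aggregate contribution-over-$q$ ratios. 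This direction is essentially a one-line verification.

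For the converse, I would assume the existence of functions $h_1, \dots, h_n$ with $C_i[\boldsymbol{X}] = q[X_i] \, h_i\bigl(S_{\boldsymbol{X}}, q[S_{\boldsymbol{X}}]\bigr)$ for all pools whose $q$-values are not identically zero, and then pin down the $h_i$'s by probing the rule with carefully chosen ``single-participant'' pools. Specifically, for each index $i$ and each $X \in \chi$ with $q[X] > 0$, I form the pool $\boldsymbol{X}^{(i)} = (0,\ldots, 0, X, 0, \ldots, 0)$ with $X$ in position $i$. Normalization gives $S_{\boldsymbol{X}^{(i)}} = X$ and $q\bigl[S_{\boldsymbol{X}^{(i)}}\bigr] = q[X]$, while for $k \neq i$ the factor $q[X_k] = q[0] = 0$ makes $C_k\bigl[\boldsymbol{X}^{(i)}\bigr] = 0$. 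The full allocation condition then forces
\[
q[X] \, h_i\bigl(X, q[X]\bigr) = X, \qquad \text{i.e., } h_i\bigl(X, q[X]\bigr) = \frac{X}{q[X]}
\]
as random variables, which in turn means $h_i(s, q[X]) = s/q[X]$ for every essential value $s$ of $X$.

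To conclude, I consider an arbitrary pool $\boldsymbol{X}$ with $q[S_{\boldsymbol{X}}] > 0$ and apply the identity just obtained to $X := S_{\boldsymbol{X}}$, which is a legitimate element of $\chi$ since $\chi$ is a convex cone containing $0$. This yields $h_i\bigl(S_{\boldsymbol{X}}, q[S_{\boldsymbol{X}}]\bigr) = S_{\boldsymbol{X}}/q[S_{\boldsymbol{X}}]$ almost surely, and substituting back,
\[
C_i[\boldsymbol{X}] = q[X_i] \cdot \frac{S_{\boldsymbol{X}}}{q[S_{\boldsymbol{X}}]} = \frac{q[X_i]}{\sum_{k=1}^n q[X_k]} S_{\boldsymbol{X}},
\]
using additivity in the last step. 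The only subtlety, and the step I expect to be the main obstacle, is making sure that evaluating a deterministic function of a two-dimensional argument at a random first coordinate and a constant second coordinate genuinely identifies the function on all values one needs; the convex-cone structure of $\chi$ together with $0 \in \chi$ is exactly what guarantees that aggregates $S_{\boldsymbol{X}}$ of general pools are themselves admissible single-participant losses, closing this gap.
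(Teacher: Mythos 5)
Your proposal is correct and follows essentially the same route as the paper: both proofs exploit the universality of the function $\mathbf{h}$ by evaluating the rule on the degenerate pool whose single nonzero component is an admissible loss (ultimately $S_{\boldsymbol{X}}$ itself), using normalization to kill the other contributions and full allocation to pin down $h_i\bigl(S_{\boldsymbol{X}}, q[S_{\boldsymbol{X}}]\bigr) = S_{\boldsymbol{X}}/q[S_{\boldsymbol{X}}]$, then additivity to identify $q[S_{\boldsymbol{X}}]$ with $\sum_k q[X_k]$. The paper substitutes $S_{\boldsymbol{X}}$ into the test pool directly, which makes your worry about identifying $h_i$ pointwise on all essential values moot — the identity is obtained between random variables in one step, exactly as in your final paragraph.
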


\begin{proof}
The ``only if'' direction of the proof is straightforward. \\
Now, let us prove the ``if'' statement. 
Suppose that $\boldsymbol{C}$ has strongly aggregate contribution-over-$q$ ratios. Consider the pool $\boldsymbol{X}$ and assume that at least one ${q\left[  X_{j}\right] >0 }$. The contribution of participant $1$ can then be expressed as follows:
\begin{align} 
C_{1}\left[  \boldsymbol{X}\right]   &  =q\left[  X_{1}\right]
\times h_{1}\left(  S_{\boldsymbol{X}},q\left[  S_{\boldsymbol{X}%
}\right]  \right) \nonumber\\
&  =\frac{q\left[  X_{1}\right]  }{q\left[
S_{\boldsymbol{X}}\right]  }\times C_{1}\left[  \left(  S_{\boldsymbol{X}%
},0,\ldots,0\right)  \right]  . \label{Q1}%
\end{align}
On the other hand, for $i=2,3,\ldots,n,$ we have that
\[
C_{i}\left[  \left(  S_{\boldsymbol{X}},0,\ldots,0\right)  \right]  =0\text{.}%
\]
Thus, by the full allocation condition, we obtain
\[
C_{1}\left[  \left(  S_{\boldsymbol{X}},0,\ldots,0\right)  \right]
=\sum_{i=1}^{n}C_{i}\left[  \left(  S_{\boldsymbol{X}},0,\ldots,0\right)
\right]  =S_{\boldsymbol{X}}\text{.}%
\]
From (\ref{Q1}), we can conclude that
\[
C_{1}\left[  \boldsymbol{X}\right]  =\frac{q\left[  X_{1}\right]
}{\sum_{k=1}^n q[X_k]}\times S_{\boldsymbol{X}}.
\]
By applying a similar argument for participants $2,3,\ldots,n$, we can conclude that $\boldsymbol{C}$ is equal to the contribution vector of the $q$-proportional RS rule. This reasoning holds for any pool  $\boldsymbol{X}$ with at least one ${q\left[  X_{j}\right] >0 }$. This ends the proof.
\end{proof}

Combining Theorems \ref{thrm:axiom-prop-1} and \ref{thrm:axiom-prop-2}, we can conclude that if a RS rule has strongly aggregate contribution-over-$q$ ratios with a normalized and additive risk metric, then it must be the $q$-proportional RS rule, and hence it also satisfies the reshuffling property and it has source-anonymous contribution-over-$q$ ratios. On the other hand, taking into account the independence of the axioms 1 and 4, see Proposition~\ref{prop:theorem3}, we know that both properties 1 and 4 have to hold before we can conclude that a RS rule has strongly aggregate contribution-over-$q$ ratios.  

Theorems~\ref{thrm:axiom-prop-1} and~\ref{thrm:axiom-prop-2} provide axiomatizations for the $q$-proportional RS rule. Notice however that compared to Theorem~\ref{thrm:axiom-prop-1}, Theorem~\ref{thrm:axiom-prop-2} is applicable for a smaller class of $q$-proportional RS rules only. In particular, both theorems offer axiomatizations for the mean-proportional RS rule when $q$ is the mean of the random loss under consideration, as well as for the scenario-based proportional RS rule when $q$ refers to the realization of this random loss in a specific scenario. 

Theorem~\ref{thrm:axiom-prop-1} offers an axiomatization of the uniform RS rule by choosing $q$ as a constant mapping. Such a constant mapping obviously is neither normalized nor additive. 
It is important to note that the uniform RS rule does not fall within the class of $q$-proportional RS rules with additive and normalized risk metric $q$. To prove this statement, assume for a moment that the uniform RS rule belongs to this class, meaning that there exists a normalized and additive risk measure $q:\chi\rightarrow \mathbb{R}$ such that for any $\boldsymbol{X}\in\chi^{n}$, we have that
\[
C_{1}^{\text{uni}}\left[  \boldsymbol{X}\right]  =\frac{S_{\boldsymbol{X}}}{n}=\frac{q[X_{1}]}{q[S_{\boldsymbol{X}}]}\times {S_{\boldsymbol{X}}}.
\]
Now, consider the loss vector $\boldsymbol{X}=\big(0,X_{2},\ldots,X_{n}
\big)$. Given the normalization property of $q$, this would imply that
$q[X_{1}]=0$, which leads to the conclusion that $S_{\boldsymbol{X}}=0$. However, this conclusion is generally incorrect, as $S_{\boldsymbol{X}}$ is not necessarily zero. Therefore, we can conclude that the uniform RS rule cannot be classified as a $q$-proportional RS rule with normalized risk metric $q$. We can conclude that the uniform RS rule fits in the setting of Theorem~\ref{thrm:axiom-prop-1}, but not in the one of Theorem~\ref{thrm:axiom-prop-2}.

Examples of $q$-proportional RS rules that rely on non-additive risk metrics $q$ are the variance-proportional RS rule in Example \ref{eq:varRS} and the standard deviation-proportional RS rule in Example \ref{eq:sdRS}.

\section{Characterizing $(q_1,q_2)$-based linear RS rules}\label{section:charac_qqlinear}

Similarly to Section~\ref{section:charac_qprop}, we present two characterizations of the $(q_1,q_2)$-based linear RS rule. These characterizations are also derived by adapting the axioms from Theorem~\ref{thrm:axiom-uni-1} and Theorem~\ref{thrm:axiom-uni-2}, which led to the uniform RS rule.

The $(q_1,q_2)$-based linear RS rule satisfies the reshuffling property, but it satisfies neither the source-anonymous contributions property (Definitions~\ref{def:source_anonymous}) nor the source-anonymous contribution-over-$q$ ratios property (Definition~\ref{def:source_anonymous_ratios}). 
Therefore, we will adapt the two properties to better align with the $(q_1,q_2)$-based linear RS rule. To address this, we introduce the concept of a `$(q_1,q_2)$-standardized contribution' for each participant $i$, which is defined as follows:
\[
\frac{C_i[\mathbf{X}]- q_1[X_{i}]}{q_2[X_{i}, S_{\mathbf{X}}]}.
\]
These standardized contributions allow us to compare the relative size of different contributions, involving distinct risk metrics $q_1$ and $q_2$. Within this framework, $q_1$ serves as the `baseline' or reference value, while $q_2$ serves as the `adjustment' or scaling factor.  This approach enables us to evaluate how each participant's contribution compares relative to the baseline and adjustment factor associated with his loss.

An example is the $q$-proportional RS rule, where both the baseline and the adjustment are equal to $q$. In this case, the `standardized contribution' can be simplified to what we called the `contribution-over-$q$-ratio', see Section~\ref{section:charac_qprop}. Another example is  the covariance-based linear RS rule (Definiton~\ref{def:cov-linear}) which results from assigning $q_1[X_i] = \mathbb{E}[X_i]$ and $q_2[X_i,S_{\mathbf{X}}] = \mathrm{cov}(X_i,S_{\mathbf{X}})$.

Let us now adapt the source-anonymity property by requiring that instead of the contributions of the participants, it is their standardized contributions that remain unchanged after a reshuffle of the individual losses.

\begin{definition}
[Source-anonymous $(q_1,q_2)$-standardized contributions]
Consider the risk metrics $q_1 : \chi \rightarrow \mathbb{R}_0^+$ and $q_2 : \chi^2 \rightarrow \mathbb{R}$. The $(q_1,q_2)$-standardized contributions of a RS rule $\boldsymbol{C}$ are said to be source-anonymous if, for any pool $\boldsymbol{X}$ and any of its reshuffles $\boldsymbol{X}^{\pi}$, the following conditions hold:
\[
{C_i[\mathbf{X}^{\pi}]- q_1[X_{\pi(i)}]}\ = \frac{q_2[X_{\pi(i)}, S_{\mathbf{X}}]}{q_2[X_{i}, S_{\mathbf{X}}]} {(C_i[\mathbf{X}]- q_1[X_{i}])}, \text{ for any }i=1,\ldots,n \text{ with } {q_2[X_{i}, S_{\mathbf{X}}]\neq 0 }. 
\]
\end{definition} 

If a RS rule features source-anonymous $(q_1,q_2)$-standardized contributions, the specific assignment of losses $X_{1}, X_{2}, \ldots, X_{n}$ to individual participants is irrelevant when determining their relative contributions. This property implies that, following a reshuffling of losses, the only necessary information to adjust the original contributions are the $q_1$ and $q_2$ values associated with those losses. The source-anonymity of standardized contributions thus ensures that regardless of which loss a participant incurs, their contribution is adjusted consistently in relation to the loss's baseline $q_1$ and adjustment factor $q_2$.

It is easy to verify that any $(q_1,q_2)$-based linear RS rule has source-anonymous $(q_1,q_2)$-standardized contributions. In particular, the uniform RS rule satisfies this property when both $q_1$ and $q_2$ are constant functions.

Let us now prove a first axiomatization for $(q_1,q_2)$-based linear RS rules.

\begin{theorem}\label{thrm:axiom-gen-1}
Consider the risk metrics $q_1 : \chi \rightarrow \mathbb{R}_0^+$ and $q_2 : \chi^2 \rightarrow \mathbb{R}$. A RS rule $\boldsymbol{C}$ is the $(q_1,q_2)$-based linear RS rule if and only if it satisfies the following two axioms:
\begin{enumerate}
    \item[1.] $\boldsymbol{C}$ satisfies the reshuffling property.
    \item[6.] $\boldsymbol{C}$ has source-anonymous $(q_1,q_2)$-standardized contributions.
\end{enumerate}
\end{theorem}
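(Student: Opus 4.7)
The plan for the ``only if'' direction is a direct algebraic verification. Plugging the $(q_1,q_2)$-based linear formula into axioms 1 and 6 and using that $\boldsymbol{X}^\pi$ contains exactly the same individual losses as $\boldsymbol{X}$ (so $S_{\boldsymbol{X}^\pi}=S_{\boldsymbol{X}}$, and the sums $\sum_{k=1}^n q_1[X_k]$ and $\sum_{k=1}^n q_2[X_k,S_{\boldsymbol{X}}]$ are invariant under permutation), both properties fall out immediately; I will not spell out the arithmetic.

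For the ``if'' direction I would proceed in close analogy with the proof of Theorem~\ref{thrm:axiom-prop-1}. Fix a pool $\boldsymbol{X}\in\chi^n$ in the domain of the rule, i.e.\ with $\sum_{k=1}^n q_2[X_k,S_{\boldsymbol{X}}]\neq 0$; then some index $j$ satisfies $q_2[X_j,S_{\boldsymbol{X}}]\neq 0$. For an arbitrary $k\in\{1,\ldots,n\}$, I would take the transposition $\pi$ that swaps $j$ and $k$ (or the identity when $k=j$), so that $\pi(j)=k$ and $S_{\boldsymbol{X}^\pi}=S_{\boldsymbol{X}}$. Applying axiom 6 at the index $i=j$, which is legitimate since $q_2[X_j,S_{\boldsymbol{X}}]\neq 0$, yields
\[
C_j[\boldsymbol{X}^\pi]-q_1[X_k]=\frac{q_2[X_k,S_{\boldsymbol{X}}]}{q_2[X_j,S_{\boldsymbol{X}}]}\bigl(C_j[\boldsymbol{X}]-q_1[X_j]\bigr),
\]
and invoking axiom 1 to replace $C_j[\boldsymbol{X}^\pi]$ by $C_{\pi(j)}[\boldsymbol{X}]=C_k[\boldsymbol{X}]$ then gives
\[
C_k[\boldsymbol{X}]=q_1[X_k]+q_2[X_k,S_{\boldsymbol{X}}]\,Y,\qquad Y:=\frac{C_j[\boldsymbol{X}]-q_1[X_j]}{q_2[X_j,S_{\boldsymbol{X}}]},
\]
with $Y$ the same quantity for every $k$. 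Summing over $k$ and using the full allocation condition~(\ref{eq2}) will then solve for
\[
Y=\frac{S_{\boldsymbol{X}}-\sum_{k=1}^n q_1[X_k]}{\sum_{k=1}^n q_2[X_k,S_{\boldsymbol{X}}]},
\]
which exactly matches the contribution of the $(q_1,q_2)$-based linear RS rule.

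The one delicate point, and the step I would want to handle most carefully, is that axiom 6 is asserted only at indices $i$ where $q_2[X_i,S_{\boldsymbol{X}}]\neq 0$, so it cannot be applied directly when $q_2[X_k,S_{\boldsymbol{X}}]=0$. The trick is to fix once and for all a single ``good'' reference index $j$ and to vary the permutation, rather than the index at which the axiom is invoked; then for any $k$ with $q_2[X_k,S_{\boldsymbol{X}}]=0$ the displayed identity still holds and collapses to $C_k[\boldsymbol{X}]=q_1[X_k]$, which agrees with the target formula since its adjustment term vanishes. Beyond that the argument is bookkeeping, and I do not expect a further obstacle.
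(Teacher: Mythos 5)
Your proposal is correct and follows essentially the same route as the paper's proof: fix one index $j$ with $q_2[X_j,S_{\boldsymbol{X}}]\neq 0$, apply axiom 6 at that index while varying the permutation, combine with the reshuffling property to express every $C_k[\boldsymbol{X}]$ as $q_1[X_k]+q_2[X_k,S_{\boldsymbol{X}}]\,Y$ with a common factor $Y$, and solve for $Y$ via the full allocation condition. The ``delicate point'' you flag — that the axiom is only invoked at the good index $j$ — is handled identically (though more implicitly) in the paper.
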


\begin{proof}
The ``only if'' direction is straightforward. We now proceed to prove the``if'' direction in a similar spirit to the proof of Theorem~\ref{thrm:axiom-prop-1}.  Consider a pool $\boldsymbol{X} \in \chi^{n}$ and its reshuffle $\boldsymbol{X}^{\pi}$. Given that a $(q_1,q_2)$-based linear RS rule is defined only when $\sum_{k=1}^n q_2[X_k, S_{\boldsymbol{X}}]\neq 0$, let us suppose that $q_2[X_j, S_{\boldsymbol{X}}]\neq 0$ for some $j \in \{1,\ldots,n\}$.
Given that the $(q_1,q_2)$-standardized contributions are assumed to be source-anonymous and $\boldsymbol{C}$ satisfies the reshuffling property, we have that
\[
C_{\pi(j)}[\mathbf{X}] - q_1[X_{\pi(j)}] = {C_j[\mathbf{X}^{\pi}]- q_1[X_{\pi(j)}]}\ = \frac{q_2[X_{\pi(j)}, S_{\mathbf{X}}]}{q_2[X_{j}, S_{\mathbf{X}}]} {(C_j[\mathbf{X}]- q_1[X_{j}])}.
\]

Since this expression holds for any permutation $\pi$, it follows that
\[
C_{k}[\mathbf{X}] - q_1[X_{k}] =  \frac{q_2[X_{k}, S_{\mathbf{X}}]}{q_2[X_{j}, S_{\mathbf{X}}]} {(C_j[\mathbf{X}]- q_1[X_{j}])}, \text{\quad for any }k.
\]

Taking into account the the full allocation condition (\ref{eq2}), we find that
\[
\frac{C_j[\mathbf{X}] - q_1[X_j]}{q_2[X_j, S_{\mathbf{X}}]} = \frac{S_{\boldsymbol{X}} - \sum_{i=1}^n q_1[X_i]}{\sum_{i =1}^n q_2[X_i, S_{\boldsymbol{X}}]},
\]
which leads us to the $(q_1,q_2)$-based linear RS rule. This reasoning holds for any $\boldsymbol{X} \in \chi^{n}$ with $q_2[X_j, S_{\boldsymbol{X}}]\neq 0$ for at least one $j\in \{1,..,n\}$. Hence it would hold for any $\boldsymbol{X} \in \chi^{n}$ with $\sum_{k=1}^n q_2[X_k, S_{\boldsymbol{X}}]\neq 0$. This ends the proof.
\end{proof}

Similarly to Proposition~\ref{prop:theorem3}, one can prove that the axioms presented in Theorem~\ref{thrm:axiom-gen-1} are independent.  

\begin{proposition}
The axioms 1 and 6 considered in Theorem~\ref{thrm:axiom-gen-1} are independent.
\end{proposition}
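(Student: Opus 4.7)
The plan is to follow the pattern established in Propositions~\ref{prop:theorem1} and~\ref{prop:theorem3}: exhibit two RS rules, one satisfying axiom 1 but not axiom 6, and another satisfying axiom 6 but not axiom 1.

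For the first direction, I would reuse the stand-alone rule $\boldsymbol{C}[\boldsymbol{X}] = \boldsymbol{X}$ from (\ref{UR1}). It trivially satisfies reshuffling since $C_i[\boldsymbol{X}^\pi] = X_{\pi(i)} = C_{\pi(i)}[\boldsymbol{X}]$. To see that it fails axiom 6, note that the source-anonymous $(q_1,q_2)$-standardized contributions identity would specialize to
\[
X_{\pi(i)} - q_1[X_{\pi(i)}] = \frac{q_2[X_{\pi(i)}, S_{\boldsymbol{X}}]}{q_2[X_i, S_{\boldsymbol{X}}]}\bigl(X_i - q_1[X_i]\bigr),
\]
which is straightforward to violate with one concrete choice of heterogeneous losses and a non-identity permutation.

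For the converse direction, no single simple rule works, so I would build a hybrid rule in the spirit of the proof of Proposition~\ref{prop:theorem3}:
\[
\boldsymbol{C}[\boldsymbol{X}] = \begin{cases} (X_{(1)}, X_{(2)}, \ldots, X_{(n)}), & \text{if } q_1[X_1] = \cdots = q_1[X_n] \text{ and } q_2[X_1, S_{\boldsymbol{X}}] = \cdots = q_2[X_n, S_{\boldsymbol{X}}], \\ \boldsymbol{C}^{\mathrm{lin}}[\boldsymbol{X}], & \text{otherwise.} \end{cases}
\]
In the ``degenerate'' branch, the common value of $q_2$ makes the ratio $q_2[X_{\pi(i)}, S_{\boldsymbol{X}}]/q_2[X_i, S_{\boldsymbol{X}}]$ equal to $1$ and the common value of $q_1$ makes the offsets match, so the source-anonymity condition collapses to $C_i[\boldsymbol{X}^\pi] = C_i[\boldsymbol{X}]$, which the order statistics rule satisfies because the sorted tuple is invariant under input permutations. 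In the ``generic'' branch, axiom 6 holds by the ``only if'' direction of Theorem~\ref{thrm:axiom-gen-1}. Reshuffling, however, fails in the degenerate branch whenever the pool has distinct realizations: e.g., any non-constant i.i.d.\ pool with $q_1 = \mathbb{E}$ and $q_2[X_i, S_{\boldsymbol{X}}] = \mathrm{cov}(X_i, S_{\boldsymbol{X}})$ lies in the degenerate branch (both $\mathbb{E}[X_i]$ and $\mathrm{cov}(X_i, S_{\boldsymbol{X}}) = \mathrm{var}(X_i)$ are constant in $i$), yet $X_{(i)} \neq X_{(\pi(i))}$ for generic $\pi$.

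The main obstacle, which I expect to be mild, is verifying internal consistency of the hybrid rule under reshuffling: one must check that reshuffles preserve the branch. This holds because the defining predicate depends only on the multiset $\{X_1, \ldots, X_n\}$ and on $S_{\boldsymbol{X}}$, both invariant under permutations, so axiom 6 can be verified branch by branch without cross-contamination. A minor bookkeeping point is to exclude from the degenerate branch any pool where the common $q_2$-value is zero (so $\boldsymbol{C}^{\mathrm{lin}}$ remains well-defined where it is invoked), but axiom 6 is vacuous in that sub-case and this causes no obstruction.
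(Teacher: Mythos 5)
Your construction is correct and is exactly the argument the paper intends: the paper states this proposition without proof, remarking only that it can be proved ``similarly to Proposition~\ref{prop:theorem3}'', and your stand-alone rule plus the order-statistics/$(q_1,q_2)$-linear hybrid is the direct analogue of the pair used there. The only blemish is your final parenthetical: the well-definedness issue for $\boldsymbol{C}^{\mathrm{lin}}$ arises in the \emph{generic} branch (when the $q_2[X_i,S_{\boldsymbol{X}}]$ are not all equal yet sum to zero), not in the degenerate branch, though this corner case is harmless and is glossed over by the paper as well.
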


In Sections~\ref{sec:axiom_uni} and~\ref{section:charac_qprop}, we presented axiomatic characterizations based on either the strongly aggregate contributions property (Definition~\ref{def:stong_agg}) or the strongly aggregate contribution-over-$q$ ratios (Definition~\ref{def:strong_agg_ratio}). However, it is clear that neither of the former two properties applies to the $(q_1,q_2)$-based linear RS rule. To account for the unique characteristics of this rule, we now define the following new property.

\begin{definition}
[Strongly aggregate $(q_1,q_2)$-based standardized contributions] Consider the risk metrics $q_1 : \chi \rightarrow \mathbb{R}_0^+$ and $q_2 : \chi^2 \rightarrow \mathbb{R}$. A RS rule $\boldsymbol{C}$  has strongly aggregate $(q_1,q_2)$-based standardized contributions if there exists a function
\[
\mathbf{h} : \mathbb{R}^3 \rightarrow \mathbb{R}^n
\]
such that the relative contributions for any $\boldsymbol{X} \in \chi^n$ with $q_2[X_j, S_{\mathbf{X}}]\neq0$ for at least one $j \in \{1,\ldots,n\}$, can be expressed as
\[
C_i[\mathbf{X}] = q_1[X_i] + q_2[X_i, S_{\mathbf{X}}] \cdot h_i(S_{\mathbf{X}}, q_1[S_{\mathbf{X}}], q_2[S_{\mathbf{X}}, S_{\mathbf{X}}]), \quad \text{for any } i=1,\ldots,n.
\]
\end{definition}

This property implies that the contribution of participant $i$ depends on his individual random loss only via the metrics $q_1[X_i]$ and $q_2[X_i, S_{\mathbf{X}}]$. The effect that does not depend on the individual loss can be expressed through a function specific to individual $i$, depending on the realization of the aggregate loss, and the two metrics calculated on that random aggregate loss.

It follows that if 
both metrics are additive in their first argument, then the $(q_1,q_2)$-based linear RS rule results in strongly aggregate $(q_1,q_2)$-based standardized contributions, whereas the uniform RS rule does not. The covariance-based linear RS rule can be expressed as 
$$ C_i^{\mathrm{cov}}[\mathbf{X}]  = \mathbb{E}[X_i] + \mathrm{cov}(X_i, S_{\mathbf{X}}) \cdot h_i(S_{\mathbf{X}}, \mathbb{E}[S_{\mathbf{X}}], \mathrm{cov}(S_{\mathbf{X}},S_{\mathbf{X}})), \quad \text{for any } i=1,\ldots,n.
$$


We now proceed to establish a second characterization for the generalized RS rule.

\begin{theorem}
Consider the risk metrics $q_1 : \chi \rightarrow \mathbb{R}_0^+$ and $q_2 : \chi^2 \rightarrow \mathbb{R}$. Let $q_1[0]=0$ and $q_2[0, \cdot]=0$, and let both measures be additive in their first argument. Then a RS rule $\boldsymbol{C}$ is the $(q_1,q_2)$-based linear RS rule if and only if it satisfies the following axiom:
\begin{enumerate}
    \item[7.] $\boldsymbol{C}$ has strongly aggregate $(q_1,q_2)$-based standardized contributions.
\end{enumerate}
\end{theorem}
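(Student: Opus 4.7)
The plan is to follow the blueprint of the proof of Theorem~\ref{thrm:axiom-prop-2}, testing the hypothesis on a degenerate ``concentrated'' pool of the form $(S_{\boldsymbol{X}},0,\ldots,0)$ in order to pin down the function $\mathbf{h}$ uniquely via the full allocation condition, and then transfer the resulting identity back to a general pool using the additivity and normalization of $q_1$ and $q_2$.

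The ``only if'' direction is essentially a bookkeeping computation. Starting from the definition of $\boldsymbol{C}^{\mathrm{lin}}$, additivity of $q_1$ in its first argument gives $\sum_{k=1}^{n}q_1[X_k]=q_1[S_{\boldsymbol{X}}]$, and additivity of $q_2$ in its first argument gives $\sum_{k=1}^{n}q_2[X_k,S_{\boldsymbol{X}}]=q_2[S_{\boldsymbol{X}},S_{\boldsymbol{X}}]$. Hence
\[
C_i^{\mathrm{lin}}[\boldsymbol{X}]=q_1[X_i]+q_2[X_i,S_{\boldsymbol{X}}]\cdot\frac{S_{\boldsymbol{X}}-q_1[S_{\boldsymbol{X}}]}{q_2[S_{\boldsymbol{X}},S_{\boldsymbol{X}}]},
\]
which exhibits the required form with $h_i(s,a,b)=(s-a)/b$ (the same function for every $i$), so axiom 7 holds.

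For the ``if'' direction, I fix a pool $\boldsymbol{X}\in\chi^n$ with $q_2[S_{\boldsymbol{X}},S_{\boldsymbol{X}}]\neq 0$ and consider the auxiliary pool $\boldsymbol{Z}^{(i)}=(0,\ldots,0,S_{\boldsymbol{X}},0,\ldots,0)$ in which $S_{\boldsymbol{X}}$ sits in slot $i$. The aggregate of $\boldsymbol{Z}^{(i)}$ is still $S_{\boldsymbol{X}}$. Using $q_1[0]=0$, $q_2[0,\cdot]=0$, and the assumed form in axiom 7, the contribution vector of $\boldsymbol{Z}^{(i)}$ is zero in every coordinate except coordinate $i$, so the full allocation condition (\ref{eq2}) forces
\[
C_i\!\left[\boldsymbol{Z}^{(i)}\right]=S_{\boldsymbol{X}}=q_1[S_{\boldsymbol{X}}]+q_2[S_{\boldsymbol{X}},S_{\boldsymbol{X}}]\cdot h_i\!\left(S_{\boldsymbol{X}},q_1[S_{\boldsymbol{X}}],q_2[S_{\boldsymbol{X}},S_{\boldsymbol{X}}]\right),
\]
and since $q_2[S_{\boldsymbol{X}},S_{\boldsymbol{X}}]\neq 0$ we can solve for $h_i$ at the point $(S_{\boldsymbol{X}},q_1[S_{\boldsymbol{X}}],q_2[S_{\boldsymbol{X}},S_{\boldsymbol{X}}])$ and obtain the common value $(S_{\boldsymbol{X}}-q_1[S_{\boldsymbol{X}}])/q_2[S_{\boldsymbol{X}},S_{\boldsymbol{X}}]$ for every $i$. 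Substituting back into the expression granted by axiom 7 for the original pool $\boldsymbol{X}$ and then invoking additivity of $q_1$ and $q_2$ in their first arguments to rewrite $q_1[S_{\boldsymbol{X}}]=\sum_k q_1[X_k]$ and $q_2[S_{\boldsymbol{X}},S_{\boldsymbol{X}}]=\sum_k q_2[X_k,S_{\boldsymbol{X}}]$ recovers exactly $\boldsymbol{C}^{\mathrm{lin}}[\boldsymbol{X}]$.

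The only delicate point I anticipate is the handling of the domain where the rule is defined: axiom 7 is stated for pools with at least one $q_2[X_j,S_{\boldsymbol{X}}]\neq 0$, whereas $\boldsymbol{C}^{\mathrm{lin}}$ requires the stronger condition $\sum_k q_2[X_k,S_{\boldsymbol{X}}]\neq 0$. By additivity in the first argument these two conditions are actually equivalent, because $\sum_k q_2[X_k,S_{\boldsymbol{X}}]=q_2[S_{\boldsymbol{X}},S_{\boldsymbol{X}}]$ and the auxiliary pool $\boldsymbol{Z}^{(i)}$ meets the hypothesis of axiom 7 precisely when $q_2[S_{\boldsymbol{X}},S_{\boldsymbol{X}}]\neq 0$, so there is no mismatch between the domains. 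This consistency check, together with the fact that the same $h_i$ reappears for every $i$ (so only one function value gets pinned down per pool), is what makes a \emph{single} axiom suffice here, in contrast with Theorem~\ref{thrm:axiom-prop-1}.
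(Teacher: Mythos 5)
Your proof is correct and follows essentially the same route as the paper's: apply axiom 7 to the concentrated pool $(0,\ldots,0,S_{\boldsymbol{X}},0,\ldots,0)$, use $q_1[0]=0$, $q_2[0,\cdot]=0$ and the full allocation condition to pin down $h_i\left(S_{\boldsymbol{X}},q_1[S_{\boldsymbol{X}}],q_2[S_{\boldsymbol{X}},S_{\boldsymbol{X}}]\right)$, then substitute back into the representation of $C_i[\boldsymbol{X}]$ and invoke additivity. One small caveat: your side remark that ``at least one $q_2[X_j,S_{\boldsymbol{X}}]\neq 0$'' is \emph{equivalent} to ``$\sum_k q_2[X_k,S_{\boldsymbol{X}}]\neq 0$'' is not true in general (since $q_2$ is real-valued, nonzero terms can cancel in the sum), but this does not damage the argument, because the theorem only requires identifying $\boldsymbol{C}$ on pools where the sum is nonzero, and for such pools both $\boldsymbol{X}$ and each $\boldsymbol{Z}^{(i)}$ satisfy the hypothesis of axiom 7.
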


\begin{proof}
The ``only if'' direction is straightforward. Let us now prove the ``if'' direction. Assume that $\boldsymbol{C}$ has strongly aggregate $(q_1,q_2)$-based standardized contributions. Consider the loss vector $\boldsymbol{X}$ with $\sum_{k=1}^n q_2[X_k, S_{\boldsymbol{X}}] \neq 0$. Then, $q_2[S_{\mathbf{X}}, S_{\mathbf{X}}] = \sum_{k=1}^n q_2[X_k, S_{\boldsymbol{X}}] \neq 0$ and 
$q_2[X_j, S_{\mathbf{X}}]\neq 0$ for at least one $j\in \{1,\ldots,n\}$. The proof follows in the same spirit as the proof of Theorem~\ref{thrm:axiom-prop-2}. Since $q_1[0]=0$ and $q_2[0, \cdot]=0$, we have that
\(
C_{i}[\left( S_{\boldsymbol{X}}, 0, \ldots, 0 \right)] = 0
\) for any $i \in \{2, \ldots, n\}.$ Using the full allocation condition (\ref{eq2}), we can conclude that $C_{1}[\left( S_{\boldsymbol{X}}, 0, \ldots, 0 \right)] = S_{\boldsymbol{X}}$. Thus we have that
\begin{align*}
C_{1}[\boldsymbol{X}] &= q_1[X_1] + q_2[X_1, S_{\mathbf{X}}] \cdot h_1(S_{\mathbf{X}}, q_1[S_{\mathbf{X}}], q_2[S_{\mathbf{X}}, S_{\mathbf{X}}]) \nonumber \\
&= q_1[X_1] + q_2[X_1, S_{\mathbf{X}}] \cdot \frac{C_{1}[\left( S_{\boldsymbol{X}}, 0, \ldots, 0 \right)] - q_1[S_{\boldsymbol{X}}]}{q_2[S_{\mathbf{X}}, S_{\mathbf{X}}]}.\\
&=  q_1[X_1] + \frac{q_2[X_1, S_{\boldsymbol{X}}]}{\sum_{i = 1}^n q_2[X_i, S_{\boldsymbol{X}}]} \left(S_{\boldsymbol{X}} - \sum_{i = 1}^n q_1[X_i]\right).
\end{align*}
Proceeding in a similar way for $i = 2, 3, \ldots, n$, we can conclude that $\boldsymbol{C}$ is the $(q_1,q_2)$-based linear RS rule.
\end{proof}


\section{Concluding remarks}



In this paper, we explored axiomatic characterizations of some simple risk-sharing rules, such as the uniform RS rule and the mean-proportional RS rule, in order to better understand the key properties behind these rules. To describe the uniform RS rule, we identified and explained three essential properties:
\begin{enumerate}
    \item Reshuffling: If participants exchange their individual losses, their contributions toward covering those losses will be exchanged accordingly.
    \item Source-anonymous contributions: Each participant's contribution depends only on the total set of individual losses, without regard to which participant 'owns' which loss. 
    \item Strongly aggregate contributions: The randomness of the contributions is caused only by the aggregate loss, and the specific functions $h_i$ that describe how to divide the aggregate loss across participants are the same for all pools.
\end{enumerate}

To the best of our knowledge, the concept of source-anonymous contributions is new to the literature. We demonstrated that the two properties of reshuffling and source-anonymous contributions are satisfied by the uniform RS  rule and moreover, a RS rule that satisfies these two properties can only be the uniform RS rule. Additionally, both reshuffling and strongly aggregate contributions are necessary and sufficient conditions for this rule. We proved that both axiomatic characterizations for the uniform RS rule are non-redundant, in the sense that the underlying axioms are independent.

These elementary axioms for the uniform RS rule served as the foundation for exploring two other broader classes of fundamental RS rules: the $q$-proportional RS rules and $(q_1,q_2)$-based linear RS rules. These include familiar RS rules such as the mean-proportional RS rule and the covariance-based linear RS rule. In order to expand the idea of source-anonymity, we needed to redefine contributions in terms of contribution-over-$q$ ratios for $q$-proportional RS rules, and use standardized expressions for $(q_1,q_2)$-based linear RS rules. Furthermore, the concept of strongly aggregate contributions had to be similarly modified for these new classes of RS rules. These modifications allowed us to establish clear necessary and sufficient conditions for these extended classes of RS rules.

Finally, we introduced scenario-based RS rules, which provide novel examples of the classes of $q$-proportional RS rules and $(q_1,q_2)$-based linear RS rules. Under these new rules, risk-sharing is determined according to predefined 'typical' and 'extreme' scenarios. Scenario-based risk-sharing  does not require knowledge of probability theory, but relies solely on expert judgments or opinions. These scenario-based RS rules may be particularly relevant, for example, in small community-based risk-sharing arrangements, where the lack of probabilistic knowledge may be compensated by common `farmers' sense, which plays an important role in such communities.

\section*{Acknowledgments}
Jan Dhaene and Rodrigue Kazzi gratefully acknowledge funding from FWO and F.R.S.-FNRS under the Excellence of Science (EOS) programme, project ASTeRISK (40007517). The authors also thank Michel Denuit, Runhuan Feng and Mario Ghossoub for helpful comments on an earlier version of the paper. The authors also thank the audience at the UConn Actuarial Science Seminar on 11 November 2024, where this paper was presented.

\bibliographystyle{apalike}
\bibliography{references}

\begin{thebibliography}{}

\bibitem[Borch, 1960]{borch1960}
Borch, K. (1960).
\newblock The safety loading of reinsurance premiums.
\newblock {\em Scandinavian Actuarial Journal}, 3-4:163--184.

\bibitem[BreadFundsUK, 2017]{breadfunds2017}
BreadFundsUK (2017).
\newblock Bread funds in the {UK}: {S}elf-organised, self-employed and in
  control.
\newblock Report, Independent Review of Employment Practices in the Modern
  Economy.
\newblock Prepared by Bread Funds UK for the Independent Review of Employment
  Practices in the Modern Economy led by Matthew Taylor.

\bibitem[Clemente et~al., 2023]{clemente2023risk}
Clemente, G.~P., Levantasi, S., and Piscopo, G. (2023).
\newblock Risk sharing rule and safety loading in a peer to peer cooperative
  insurance model.
\newblock {\em Decisions in Economics and Finance}.
\newblock Available at \url{https://doi.org/10.1007/s10203-024-00438-0}.

\bibitem[De~Weerdt and Dercon, 2006]{de2006risk}
De~Weerdt, J. and Dercon, S. (2006).
\newblock Risk-sharing networks and insurance against illness.
\newblock {\em Journal of development Economics}, 81(2):337--356.

\bibitem[De~Weerdt and Fafchamps, 2011]{de2011social}
De~Weerdt, J. and Fafchamps, M. (2011).
\newblock Social identity and the formation of health insurance networks.
\newblock {\em Journal of Development Studies}, 47(8):1152--1177.

\bibitem[Denuit and Dhaene, 2012]{denuit2012convex}
Denuit, M. and Dhaene, J. (2012).
\newblock Convex order and comonotonic conditional mean risk sharing.
\newblock {\em Insurance: Mathematics and Economics}, 51:265--270.

\bibitem[Denuit et~al., 2024]{denuit2024comonotonicity}
Denuit, M., Dhaene, J., Ghossoub, M., and Robert, C.~Y. (2024).
\newblock Comonotonicity and {P}areto optimality, with application to
  collaborative insurance.
\newblock \textit{Insurance: Mathematics and Economics}, accepted for
  publication.
\newblock Available at
  \url{https://feb.kuleuven.be/public/u0014274/2024-Denuit-Dhaene-Ghossoub-Robert.pdf}.

\bibitem[Denuit et~al., 2022]{denuit2022risk}
Denuit, M., Dhaene, J., and Robert, C.~Y. (2022).
\newblock Risk-sharing rules and their properties, with applications to
  peer-to-peer insurance.
\newblock {\em Journal of Risk and Insurance}, 89(3):615--667.

\bibitem[Denuit and Robert, 2021]{denuit2021risk}
Denuit, M. and Robert, C.~Y. (2021).
\newblock From risk sharing to pure premium for a large number of heterogeneous
  losses.
\newblock {\em Insurance: Mathematics and Economics}, 96:116--126.

\bibitem[Dhaene et~al., 2024]{dhaene2024axiomatic}
Dhaene, J., Robert, C.~Y., Cheung, K.~C., and Denuit, M. (2024).
\newblock An axiomatic theory for quantile-based risk sharing.
\newblock Submitted.
\newblock Available at
  \url{https://feb.kuleuven.be/public/u0014274/2024-Dhaene-Robert-Cheung-Denuit.pdf}.

\bibitem[Feng, 2023]{feng2023book}
Feng, R. (2023).
\newblock {\em Decentralized Insurance: Technical Foundations of Business
  Models}.
\newblock Springer Nature: Switzerland.

\bibitem[Feng et~al., 2023]{feng2023peer}
Feng, R., Liu, C., and Taylor, S. (2023).
\newblock Peer-to-peer risk sharing with an application to flood risk pooling.
\newblock {\em Annals of Operations Research}, 321:813--842.

\bibitem[Feng et~al., 2024]{feng2024unified}
Feng, R., Liu, M., and Zhang, N. (2024).
\newblock A unified theory of decentralized insurance.
\newblock {\em Insurance: Mathematics and Economics}, 119:157--178.

\bibitem[Ghossoub et~al., 2024a]{ghossoub2024allocation}
Ghossoub, M., Principi, G., and Wang, R. (2024a).
\newblock Allocation mechanisms in decentralized exchange markets with
  frictions.
\newblock Available at \url{https://arxiv.org/abs/2404.10900}.

\bibitem[Ghossoub and Zhu, 2024]{ghossoub2024efficiency}
Ghossoub, M. and Zhu, M. (2024).
\newblock Efficiency in pure-exchange economies with risk-averse monetary
  utilities.
\newblock Available at \url{https://arxiv.org/abs/2406.02712}.

\bibitem[Ghossoub et~al., 2024b]{ghossoub2024pareto}
Ghossoub, M., Zhu, M., and Chong, W. (2024b).
\newblock Pareto-optimal peer-to-peer risk sharing with robust distortion risk
  measures.
\newblock Available at \url{https://arxiv.org/abs/2409.05103}.

\bibitem[Gilboa et~al., 2019]{gilboa2019axiomatizations}
Gilboa, I., Postlewaite, A., Samuelson, L., and Schmeidler, D. (2019).
\newblock What are axiomatizations good for?
\newblock {\em Theory and Decision}, 86:339--359.

\bibitem[Goovaerts and Dhaene, 1997]{goovaerts1997prem}
Goovaerts, M. and Dhaene, J. (1997).
\newblock On the characterization of {W}ang's class of premium principles.
\newblock {\em Transactions of the 26th International Congress of Actuaries},
  4:121--134.

\bibitem[Jiao et~al., 2022]{jiao2022axiomatic}
Jiao, Z., Kou, S., Liu, Y., and Wang, R. (2022).
\newblock An axiomatic theory for anonymized risk sharing.
\newblock Available at \url{https://arxiv.org/abs/2208.07533}.

\bibitem[Le~Polain et~al., 2018]{le2018interest}
Le~Polain, M., Sterck, O., and Nyssens, M. (2018).
\newblock Interest rates in savings groups: Thrift or threat?
\newblock {\em World Development}, 101:162--172.

\bibitem[Levantasi and Piscopo, 2022]{levantasi2022mutual}
Levantasi, S. and Piscopo, G. (2022).
\newblock Mutual peer-to-peer insurance: The allocation of risk.
\newblock {\em Journal of Co-operative Organization and Management}, 10:100154.

\bibitem[Mazur, 2023]{mazur2023sharing}
Mazur, K. (2023).
\newblock Sharing risk to avoid tragedy: Informal insurance and irrigation in
  village economies.
\newblock {\em Journal of Development Economics}, 161:103030.

\bibitem[Oostveen, 2018]{oostveen2018selfemployed}
Oostveen, A. (2018).
\newblock Social protection for the self-employed through "bread funds".
\newblock Technical Report 2018/4, European Social Policy Network (ESPN).

\bibitem[{The Risk Sharing Platform}, 2022]{risksharing2022}
{The Risk Sharing Platform} (2022).
\newblock Risk sharing framework: Enhancing the impact of humanitarian action
  through improved risk sharing.
\newblock Available at
  \url{https://interagencystandingcommittee.org/sites/default/files/migrated/2023-06/Risk%20Sharing%20Framework.pdf}.

\bibitem[Van~Hemert et~al., 2024]{van2024autonomy}
Van~Hemert, D., D’Espallier, B., and Mersland, R. (2024).
\newblock Autonomy and financial performance in informal savings associations.
\newblock {\em The Journal of Development Studies}, pages 1--20.

\bibitem[Vigani and Kathage, 2019]{vigani2019}
Vigani, M. and Kathage, J. (2019).
\newblock To risk or not to risk? {R}isk management and farm productivity.
\newblock {\em American Journal of Agricultural Economics}, 101(5):1432--1454.

\bibitem[Wang et~al., 1997]{wang1997pricing}
Wang, S.~S., Young, V.~R., and Panjer, H.~H. (1997).
\newblock Axiomatic characterization of insurance prices.
\newblock {\em Insurance: Mathematics and Economics}, 21:173--183.

\bibitem[Yang and Wei, 2024]{yang2024optimality}
Yang, J. and Wei, W. (2024).
\newblock On the optimality of linear residual risk sharing.
\newblock Available at
  \url{https://papers.ssrn.com/sol3/papers.cfm?abstract_id=4954807}.

\end{thebibliography}

\end{document}